\pgfplotsset{compat=1.15}
\theoremstyle{plain}
\newtheorem{theorem}{Theorem}
\newtheorem{definition}{Definition}
\newtheorem{lemma}[theorem]{Lemma}
\newtheorem{claim}{Claim}[theorem]
\newtheorem{proposition}{Proposition}
\newtheorem{problem}[theorem]{Problem}
\let\leq\leqslant
\let\geq\geqslant
\def\calG{\mathcal{G}}
\def\calF{\mathcal{F}}
\newcommand{\NP}{\ensuremath{\mathsf{NP}}\xspace}
\newcommand{\Oh}[1]{\ensuremath{\mathcal{O}(#1)}}
\newcommand{\yes}{{\bf{yes}}}
\newcommand{\no}{{\bf{no}}}
\newcommand{\bpd}{{\sc{Bipartite Permutation Vertex Deletion }}}
\newcommand{\FPT}{\ensuremath{\mathsf{FPT}}\xspace}
\newcommand{\W}[1]{\ensuremath{\mathsf{W[#1]}}\xspace}
\author[\L{}. Bo\.zyk, J. Derbisz, T. Krawczyk, J. Novotn\'a, K. Okrasa]{\L{}ukasz Bo\.zyk$^{1\dag}$, Jan Derbisz$^2$, Tomasz Krawczyk$^{3\ddag}$, \\Jana Novotn\'a$^{4\dag}$, and Karolina Okrasa$^{5\dag}$}
\address{$^1$Faculty of Mathematics, Informatics and Mechanics, University of Warsaw, Poland}
\email{l.bozyk@uw.edu.pl}
\address{$^2$Theoretical Computer Science Department, 
Faculty of Mathematics and Computer Science, Jagiellonian University in Krak\'ow, Poland.}
\email{jan.derbisz@doctoral.uj.edu.pl}
\address{$^3$Theoretical Computer Science Department, 
Faculty of Mathematics and Computer Science, Jagiellonian University in Krak\'ow, Poland.}
\email{krawczyk@tcs.uj.edu.pl}
\address{$^4$Faculty of Mathematics, Informatics and Mechanics, University of Warsaw, Poland \break Faculty of Mathematics and Physics, Charles University, Czech Republic }
\email{janca@kam.mff.cuni.cz}
\address{$^5$Faculty of Mathematics, Informatics and Mechanics, University of Warsaw, Poland\break Faculty of Mathematics and Information Science, Warsaw University of Technology,\break Poland}
\email{k.okrasa@mini.pw.edu.pl}
\thanks{$^*$The extended abstract of the work was accepted to the conference IPEC 2020.}
\thanks{$^\dag$This research is part of a project that has received funding from the European Research Council (ERC) under the European Union's Horizon 2020 research and innovation programme, grant agreement 714704.
It was partially carried out during the Parameterized Algorithms Retreat of the University of Warsaw, PARUW 2020, held in Krynica-Zdr\'oj in February 2020.}
\thanks{$^\ddag$Research of this author is partially supported by Polish National Science Center (NCN) grant 2015/17/B/ST6/01873.}
\title[Vertex deletion into bipartite permutation graphs]{Vertex deletion into \\bipartite permutation graphs$^*$}
\keywords{permutation graphs, comparability graphs, partially ordered set, graph modification problems}
\begin{document}

\begin{abstract}

A permutation graph can be defined as an intersection graph of segments whose endpoints lie on two parallel lines $\ell_1$ and $\ell_2$, one on each. A bipartite permutation graph is a permutation graph which is bipartite.

In this paper we study the parameterized complexity of the bipartite permutation vertex deletion problem, which asks, for a given $n$-vertex graph, whether we can remove at most $k$ vertices to obtain a bipartite permutation graph. This problem is \NP-complete by the classical result of Lewis and Yannakakis \cite{LewYan78}.

We analyze the structure of the so-called almost bipartite permutation graphs which may contain holes (large induced cycles)
in contrast to bipartite permutation graphs. We exploit the structural properties of the shortest hole in a such graph. We use it to obtain an algorithm for the bipartite permutation vertex deletion problem with running time $\Oh{9^k \cdot n^9}$, and also give a polynomial-time 9-approximation algorithm.

\end{abstract}

\maketitle

\section{Introduction}
\label{sec:introduction}

Many standard computational problems, including maximum clique, maximum independent set, or minimum coloring, which are \NP-hard in general, have poly\-nomial-time exact or approximation algorithms in restricted classes of graphs.
Due to the practical and theoretical applications, some of such graph classes are particularly intensively studied.
Among them~are:
\begin{itemize}
 \item \emph{interval graphs}: intersection graphs of intervals on a real line, 
 \item \emph{unit interval graphs}: intersection graphs of intervals none of
which is contained in another, %
 \item \emph{chordal graphs}: intersection graphs of subtrees of a tree,
 \item \emph{function and permutation graphs}: intersection graphs of continuous and linear functions, respectively, defined on the interval $[0,1]$,
 \item \emph{comparability graphs}: graphs whose edges correspond to the pairs of vertices comparable in some fixed partial order ${<}$ on the vertex set (such an order is called a \emph{transitive orientation} of the graph),
 \item \emph{co-comparability graphs}: the complements of comparability graphs.
\end{itemize}
It is well known that the class of function graphs corresponds to the class of co-comparability graphs \cite{DBLP:journals/dm/GolumbicRU83}, and the class of permutation graphs corresponds to the intersection of comparability and co-comparability graphs \cite{pnueli1971transitive} (see Figure \ref{fig:class-hierarchy} for the hierarchy of inclusions).
All these classes of graphs are \emph{hereditary}, which means 
that they are closed under vertex deletion. 

\begin{figure}[htp!]
\begin{center}
\begin{tikzpicture}[xscale=1,yscale=0.8,>=latex]
\tikzstyle{cla}=[draw,rectangle,fill=white,inner sep=2pt,minimum width=75pt,minimum height=20pt,rounded corners=7pt]
\node[cla] (pf) at (11,0) {perfect};
\node[cla] (intv) at (3.5,1.3) {interval};
\node[cla] (uintv) at (0,1.3) {proper interval};
\node[cla] (bip) at (3.5,-1.3) {bipartite};
\node[cla] (perm) at (3.5,0) {permutation};
\node[cla] (biperm) at (0,-0.65) {\begin{tabular}{c} bipartite \\ permutation
\end{tabular}};

\tikzstyle{cla}=[draw,rectangle,fill=white, inner sep=2pt,minimum width=85pt,minimum height=20pt,rounded corners=7pt]

\node[cla] (chor) at (7.5,1.3) {chordal};
\node[cla] (com) at (7.5,-1.3) {comparability};
\node[cla] (cocom) at (7.5,0) {co-comparability};

\draw[thick,->] (uintv) -- (intv);
\draw[thick,->] (intv) -- (chor);
\draw[thick,->] (intv) -- (cocom);
\draw[thick,->] (chor) -- (pf);
\draw[thick,->] (com) -- (pf);
\draw[thick,->] (cocom) -- (pf);
\draw[thick,->] (biperm) -- (perm);
\draw[thick,->] (biperm) -- (bip);
\draw[thick,->] (perm) -- (com);
\draw[thick,->] (perm) -- (cocom);
\draw[thick,->] (bip) -- (com);
\end{tikzpicture}
\end{center}
\caption{\label{fig:class-hierarchy}
An hierarchy of inclusion of the hereditary graph classes considered in the introduction.
 An arrow from graph class $A$ to graph class $B$ indicates that $A \subset B$.}
\end{figure}
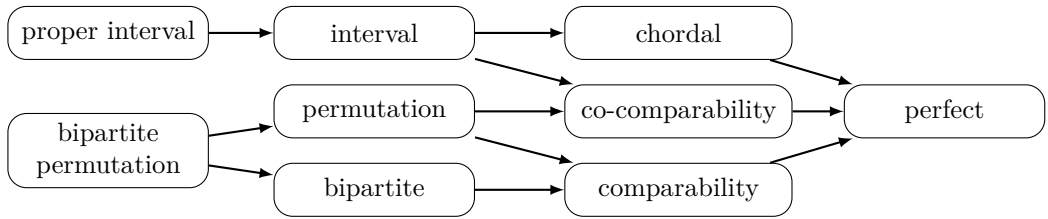

Being hereditary is a very useful property in algorithmic design as every hereditary class of graphs can also be uniquely characterized in terms of minimal forbidden induced subgraphs: a~graph belongs to a class $\calG$ if and only if it does not contain any graph from some family $\calF$ as an induced subgraph. 
For every graph class introduced above, a characterization by forbidden subgraphs is known, see \cite{CRST06} for perfect graphs, \cite{LekBol62} for interval graphs,~\cite{Gal67} for comparability and permutation graphs. 
However, for all of them, the family of forbidden subgraphs is infinite and it may also be quite complex. %
Moreover, every graph $G$ from any class introduced above is \emph{perfect}. %
Gr\"otschel, Lov\'asz, and Schrijver \cite{GLS-book} showed that in the class of perfect graphs the maximum clique, the maximum independent set, and the minimum coloring problems can be solved in polynomial time. 

Polynomial-time algorithms devised for the above-mentioned graph classes can sometimes be adjusted to also work on graphs that are ``close'' to graphs from these classes.
Usually, the ``closeness'' of a graph $G$ to a graph class $\mathcal{G}$
is measured by the number of operations required to transform $G$ into a graph from the class $\mathcal{G}$, where a single operation consists either on removing a vertex from $G$ or on adding or removing an edge from $G$.
Such an approach leads us to the following generic problem.

\medskip
\begin{tabular}{rl}
\textbf{Problem}: & Graph modification problem into a class of graphs $\calG$ \\
\textbf{Input}: & A graph $G$ (typically not from $\calG$) and a number $k$ \\
\textbf{Question}: & Can the graph $G$ be transformed into a graph of the class $\calG$ \\ & by performing at most $k$ modifications of an appropriate kind?
\end{tabular}
\medskip

Depending on the kind of modifications allowed, we obtain four variants of this problem: vertex deletion problem, edge deletion problem, edge completion problem, and edge edition problem (the latter allowing both deletions and additions of edges).
For the class of graphs defined above, all four variants of the modification problem are \NP-hard---see \cite{Man08} for references to \NP-hardness proofs.
In particular, Lewis and Yannakakis \cite{LewYan78} showed that the vertex deletion problem into any non-trivial hereditary class of graphs is \NP-hard.
This is not surprising, as many classical hard problems can be formulated as vertex deletion problems into particular classes of graphs, for example, \textsc{Vertex Cover} as vertex deletion to edgeless graphs, \textsc{Feedback Vertex Set} as vertex deletion to forests, and \textsc{Odd Cycle Transversal} as vertex deletion to bipartite graphs.

Graph modification problems are a popular research direction in the study of the \emph{parameterized complexity} of \NP-complete problems.
In general, for a problem $\Pi$, an input of a parameterized problem consists of an instance $I$ of $\Pi$ and a \emph{parameter} $k \in \mathbb{N}$. Then we say that $\Pi$ is \emph{fixed parameter tractable} (\FPT) if there exists an algorithm deciding whether $I$ is a yes-instance of $\Pi$ in time $f(k)\cdot |I|^{\mathcal{O}(1)}$, where $f$ is some computable function. 
For a graph modification problem, we often choose the parameter $k$ as a number of allowed modifications, so the instance of such a problem is still a pair $(G,k)$.

It turns out that characterizations by forbidden structures are sometimes useful to design \FPT algorithms for graph modification problems.
For example, Cai \cite{Cai96} proposed an \FPT algorithm for modification problems into classes of graphs characterized by a finite family of forbidden induced subgraphs~$\calF$.
His algorithm identifies a forbidden structure in the input graph (which can be done in polynomial time when $\calF$ is finite) and branches over all possible ways of modifying that structure.
Since the families of forbidden structures are infinite for graph classes introduced above, modification algorithms for these classes have to be much more sophisticated.
For several of them modification problems have satisfactory solutions:
\begin{itemize}
\item chordal graphs: all four versions of the modification problem are \FPT \cite{CaoMarx14,Marx06};
\item interval graphs: edge completion and edge deletion are \FPT \cite{VHPTIntCom09,DBLP:conf/soda/Cao16}, vertex deletion is \FPT \cite{CaoMarx14}, edge edition remains open;
\item proper interval graphs: all four versions of the modification problem are \FPT \cite{Cao15}.
\end{itemize}
On the other hand, it is known that the vertex deletion to perfect graphs is 
\W{2}-hard~\cite{HHJKV}.
It is worth mentioning that for a long time, it was unknown whether there are classes of graphs recognizable in polynomial time for which modification problems are hard.
The first such example was given by Lokshtanov~\cite{Lok08}, who proved that the vertex deletion is \W{2}-hard for graphs avoiding all \emph{wheels} (i.e.,\ cycles with an additional vertex adjacent to all other vertices).
It is unknown whether comparability graphs, co-comparability graphs, and permutation graphs have \FPT modification algorithms.
The class of co-comparability graphs,
which constitutes the superclass of interval graphs and an important subclass of perfect graphs, seems to be particularly interesting from the parameterized point of view.

\subsection*{Our focus} Like the class of interval graphs, the class of permutation graphs admits polynomial-time algorithms for rich family problems which are \NP-complete in general. 
Apart from the already mentioned classical hard problems which are polynomial-time solvable for perfect graphs, there also exist polynomial algorithms solving e.g., \textsc{Hamiltonian Cycle}, \textsc{Feedback Vertex Set} or \textsc{Dominating Set} in the class of permutation graphs~\cite{DBLP:conf/fct/BrandstadtK85,DBLP:journals/siamcomp/DeogunS94}.

In light of the above considerations, since all the  modification problems into the class of permutation graphs---and the related classes of comparability and co-comparability graphs---remain open, restricting our attention to the class of \emph{bipartite permutation graphs} appears to be a natural research direction.
Bipartite permutation graphs form an interesting graph class themselves, first investigated by Spinrad, Brandst\"{a}dt, and Stewart~\cite{SBS87}, who characterized them by means of appropriately chosen linear orderings of its bipartition classes. 

One of the most interesting results concerning the bipartite permutation graphs is by Heggernes et al.~\cite{HHLN12}, who showed that the \NP-complete problem of computing the \emph{cutwidth} of a graph (i.e., finding a linear order of the vertices of a graph that minimizes the maximum number of edges intersected by any line inserted between two consecutive vertices) is polynomial for bipartite permutation graphs. 

Our algorithm exploits the absence of some forbidden structures in bipartite permutation graphs. Since these structures cannot, in particular, occur in permutation graphs, we believe that besides being a complete result itself, our research is a step towards understanding the parameterized complexity of modification problems into permutation graphs.

\subsection*{Our results} 
We focus mainly on the modification by vertex deletion. 
\begin{theorem}\label{thm_main_fpt}
There is an $\Oh{9^k \cdot |V(G)|^9}$-time algorithm for instances $(G,k)$ of the vertex deletion into bipartite permutation graphs problem.
\end{theorem}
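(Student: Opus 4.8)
The plan is to build a bounded-search-tree (branching) algorithm driven by a forbidden-induced-subgraph description of bipartite permutation graphs. The starting point is the characterization of Spinrad, Brandst\"{a}dt, and Stewart~\cite{SBS87}, which, phrased for arbitrary (not necessarily bipartite) input graphs, states that a graph is a bipartite permutation graph if and only if it contains no triangle, no induced cycle of length at least five, and no induced subgraph from a fixed \emph{finite} family $\mathcal{F}_{\mathrm{fin}}$ of small graphs (each on at most nine vertices). Thus the obstructions split into a finite part (triangles and the members of $\mathcal{F}_{\mathrm{fin}}$) and an infinite part (the \emph{holes}, i.e.\ induced cycles of length at least five). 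For a finite obstruction family, Cai's meta-theorem~\cite{Cai96} would already yield an \FPT algorithm; the entire difficulty, and the reason the base of the exponent is exactly $9$, lies in coping with the infinite family of holes while keeping every branching step of degree at most $9$.

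First I would dispose of the finite obstructions. As long as $G$ contains an induced triangle or an induced copy of some $F\in\mathcal{F}_{\mathrm{fin}}$, such a copy lives on at most nine vertices and can be located by brute force in $\Oh{|V(G)|^9}$ time. Since every deletion set turning $G$ into a bipartite permutation graph must destroy this induced obstruction, and destroying it requires deleting one of its vertices, I branch by deleting each of the at most nine vertices of the obstruction, decreasing $k$ by one in each branch. Once this phase terminates the current graph is an \emph{almost bipartite permutation graph}: it is free of all finite obstructions, so by the characterization its only possible obstructions are holes.

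The heart of the argument is the treatment of holes, where the structural analysis announced in the abstract is used. Given an almost bipartite permutation graph that still contains a hole, I would compute a \emph{shortest} hole $H$, which can be done in polynomial time. The key structural claim to establish is a \emph{bounded branching set} property: from $H$ and the way it attaches to the rest of the graph one can extract a set $Z$ with $|Z|\leq 9$ such that $(G,k)$ is a \yes-instance if and only if $(G-z,\,k-1)$ is a \yes-instance for some $z\in Z$. The soundness direction is immediate, since a solution of size $k-1$ for $G-z$ together with $z$ gives a solution of size $k$ for $G$; the nontrivial direction is completeness, namely that \emph{some} minimum solution can always be chosen to delete a vertex of $Z$. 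This is precisely the main obstacle: a long hole has no constant-size subset hit by all solutions, so one cannot simply branch over $V(H)$. The argument must instead exploit the minimality of $H$ together with the absence of the finite obstructions to show that, up to rerouting a minimum solution without increasing its size, only a bounded and explicitly describable portion of $H$ and of its neighbourhood is relevant for repairing the graph, everything else being forced once those few vertices are fixed. Establishing and making algorithmic this ``a shortest hole is locally rigid'' phenomenon is the technical core of the paper.

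Putting the two phases together, each branching step has degree at most $9$ and decreases the parameter by one, so the search tree has at most $9^k$ leaves; at every node the dominant cost is the $\Oh{|V(G)|^9}$ enumeration used to locate an obstruction, which yields the claimed $\Oh{9^k\cdot|V(G)|^9}$ running time. Finally, the very same bounded set $Z$ of size at most $9$ that drives the branching gives the promised polynomial-time $9$-approximation: rather than branching over the vertices of each obstruction we resolve, we simply delete all of them at once; since any optimal solution must delete at least one vertex of each obstruction, the total number of vertices deleted exceeds the optimum by a factor of at most~$9$.
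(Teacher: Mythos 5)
Your first phase (locating a small forbidden induced subgraph on at most nine vertices in $\Oh{|V(G)|^9}$ time and branching over its vertices) coincides with the paper's, and the accounting of the $9^k$ search tree is the same. The problem is your second phase. You assert, as the ``key structural claim,'' that in an almost bipartite permutation graph containing a hole one can compute a set $Z$ with $|Z|\leq 9$ such that some optimal solution meets $Z$, and you then continue branching. You never prove this claim --- you explicitly defer it as ``the technical core'' --- so the argument has a gap exactly where the difficulty lies. Worse, it is not clear that such a bounded branching set exists at all: every solution must meet the shortest hole $C$, but $C$ has unboundedly many vertices, and neither you nor the paper exhibits a constant-size vertex set guaranteed to be hit by some optimal solution. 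What the paper actually proves is a localization statement of a different kind: every \emph{minimal} hole cut is contained in a window $V[i-2,i+2]$ around five consecutive vertices of $C$ (Proposition~\ref{prop_min_cut}), which is derived from the cylinder/M\"obius-strip embedding of almost bipartite permutation graphs (Lemma~\ref{lem:slices_are_bipartite_permutation} and Proposition~\ref{prop:holes}).

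With that localization the paper does \emph{not} branch any further: for each of the $m$ windows it builds a vertex-capacitated $s$--$t$ network and computes a minimum hole cut exactly by max-flow in $\Oh{n^6}$ total time (Lemma~\ref{lem_max_flow}), then compares its size with the remaining budget $k'$. The second phase is thus an exact polynomial-time subroutine applied at each leaf, not a continuation of the bounded search tree. To repair your outline you must either prove your bounded-branching-set claim (a stronger statement than anything established in the paper) or replace it by such a polynomial-time exact algorithm on almost bipartite permutation graphs, which is precisely what the structural analysis is for. A side remark: your closing sentence also misstates the source of the $9$-approximation --- for the hole phase there is no obstruction on at most nine vertices to delete wholesale; the paper's approximation instead deletes a minimum hole cut, whose size is at most that of the optimum restricted to the residual graph.
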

We prove Theorem~\ref{thm_main_fpt} in Section~\ref{sec_fpt_proof}.
Our algorithm is based on the characterization of bipartite permutation graphs by forbidden subgraphs. 
Using the characterization, at first, we get rid of constant-size forbidden subgraphs by branching, which is a standard technique in modification problems on hereditary graph classes~\cite{HV13,VHPTIntCom09}. 
We call graphs without these forbidden subgraphs \emph{almost bipartite permutation graphs}. 

Our main contribution is in the structural analysis of almost bipartite permutation graphs which may contain holes %
(on more than ten vertices)
in contrast to bipartite permutation graphs. This approach is partially inspired by the ideas of van 't Hof and Villanger \cite{HV13} who used similar tools in their work on proper interval vertex deletion problem. We use the result of
Spinrad, Brandst\"{a}dt, and Stewart~\cite{SBS87}, who showed that the vertices of every connected bipartite permutation graph $G=(U,W,E)$ can be embedded into a strip in such a way that
the vertices from $U$ are on the bottom edge of the strip, the vertices from $W$ are on the top edge of the strip, 
the neighbors $N(u)$ of $u$ occur consecutively on the top edge of the strip for every $u \in U$ (adjacency property), 
the vertices from $N(u)-N(u')$ occur consecutively on the top edge of the strip for every $u,u' \in U$ (enclosure property), 
and the analogous properties are satisfied by the vertices in $W$ (see Figure~\ref{fig:bipartite_permutation_graph}).

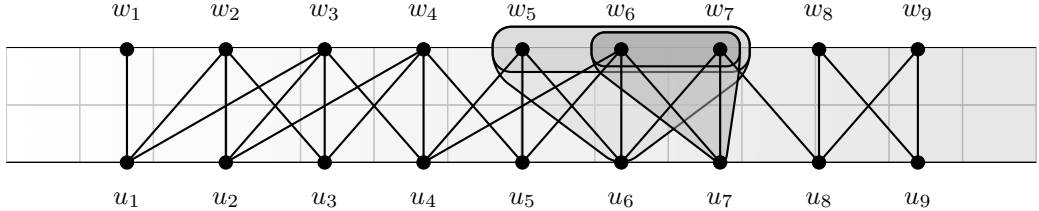
\begin{figure}[htp!]
\centering

\begin{tikzpicture}[xscale=1.3,yscale=1.5]

\begin{axis}[height=2.6cm, width=12cm,
    hide axis,
    view = {0}{90},
    at={(-1.22cm,0)}
    ]
 \addplot3 [
    surf,
    colormap={blackwhite}{gray(0cm)=(1); gray(1cm)=(0.9)},
    shader     = faceted interp,
    point meta = x,
    samples    = 15,
    samples y  = 3,
    z buffer   = sort,
    domain     = -.5:8.5,
    y domain   = 0:1
    ] (
    {x},
    {y/4},
    {0}
    );
 \addplot3 [color=black,
    domain     = -.5:8.5,samples y=0,samples=2*(640/360)*24+1,
    ] (
    {x},
    {0},
    {0} 
    );
    
    \addplot3 [color=black,
    domain     = -.5:8.5,samples y=0,samples=2*(640/360)*24+1,
    ] (
    {x},
    {1/4},
    {0}
    );   
    
\end{axis}

\coordinate (w1) at (0,1) {};
\coordinate (w2) at (1,1) {};
\coordinate (w3) at (2,1) {};
\coordinate (w4) at (3,1) {};
\coordinate (w5) at (4,1) {};
\coordinate (w6) at (5,1) {};
\coordinate (w7) at (6,1) {};
\coordinate (w8) at (7,1) {};
\coordinate (w9) at (8,1) {};

\coordinate (lw1) at (0,1.33) {};
\coordinate (lw2) at (1,1.33) {};
\coordinate (lw3) at (2,1.33) {};
\coordinate (lw4) at (3,1.33) {};
\coordinate (lw5) at (4,1.33) {};
\coordinate (lw6) at (5,1.33) {};
\coordinate (lw7) at (6,1.33) {};
\coordinate (lw8) at (7,1.33) {};
\coordinate (lw9) at (8,1.33) {};

\coordinate (u1) at (0,0) {};
\coordinate (u2) at (1,0) {};
\coordinate (u3) at (2,0) {};
\coordinate (u4) at (3,0) {};
\coordinate (u5) at (4,0) {};
\coordinate (u6) at (5,0) {};
\coordinate (u7) at (6,0) {};
\coordinate (u8) at (7,0) {};
\coordinate (u9) at (8,0) {};

\coordinate (lu1) at (0,-0.33) {};
\coordinate (lu2) at (1,-0.33) {};
\coordinate (lu3) at (2,-0.33) {};
\coordinate (lu4) at (3,-0.33) {};
\coordinate (lu5) at (4,-0.33) {};
\coordinate (lu6) at (5,-0.33) {};
\coordinate (lu7) at (6,-0.33) {};
\coordinate (lu8) at (7,-0.33) {};
\coordinate (lu9) at (8,-0.33) {};

\begin{scope}[fill opacity=0.5]
\draw[rounded corners=7, fill=gray!30, thick] (3.7,0.8)--(3.7,1.2) -- (6.3,1.2) -- (6.3,0.8)--(5,-0.05)--cycle;
\draw[rounded corners=7, fill=gray!40, thick] (3.7,0.8)--(3.7,1.2) -- (6.3,1.2) -- (6.3,0.8)--cycle;
\draw[rounded corners=5, fill=gray!60, thick] (4.7,0.85)--(4.7,1.15) -- (6.2,1.15) -- (6.2,0.85)--(6.05,-0.05)--cycle;
\draw[rounded corners=5, fill=gray!70, thick] (4.7,0.85)--(4.7,1.15) -- (6.2,1.15) -- (6.2,0.85)--cycle;
\end{scope}


\path (u1) edge[thick] (w1);
\path (u1) edge[thick] (w2);
\path (u1) edge[thick] (w3);

\path (u2) edge[thick] (w2);
\path (u2) edge[thick] (w3);
\path (u2) edge[thick] (w4);

\path (u3) edge[thick] (w2);
\path (u3) edge[thick] (w3);
\path (u3) edge[thick] (w4);

\path (u4) edge[thick] (w3);
\path (u4) edge[thick] (w4);
\path (u4) edge[thick] (w5);
\path (u4) edge[thick] (w6);

\path (u5) edge[thick] (w4);
\path (u5) edge[thick] (w5);
\path (u5) edge[thick] (w6);

\path (u6) edge[thick] (w5);
\path (u6) edge[thick] (w6);
\path (u6) edge[thick] (w7);

\path (u7) edge[thick] (w6);
\path (u7) edge[thick] (w7);

\path (u8) edge[thick] (w7);
\path (u8) edge[thick] (w8);
\path (u8) edge[thick] (w9);

\path (u9) edge[thick] (w8);
\path (u9) edge[thick] (w9);

\tikzstyle{every node}=[circle,minimum size=5pt,inner sep=0pt,draw,fill]
\node at (w1) {};
\node at (w2) {};
\node at (w3) {};
\node at (w4) {};
\node at (w5) {};
\node at (w6) {};
\node at (w7) {};
\node at (w8) {};
\node at (w9) {};
\node at (u1) {};
\node at (u2) {};
\node at (u3) {};
\node at (u4) {};
\node at (u5) {};
\node at (u6) {};
\node at (u7) {};
\node at (u8) {};
\node at (u9) {};

\tikzstyle{every node}=[inner sep=2pt]
\node at (lw1) {$w_1$};
\node at (lw2) {$w_2$};
\node at (lw3) {$w_3$};
\node at (lw4) {$w_4$};
\node at (lw5) {$w_5$};
\node at (lw6) {$w_6$};
\node at (lw7) {$w_7$};
\node at (lw8) {$w_8$};
\node at (lw9) {$w_9$};

\node at (lu1) {$u_1$};
\node at (lu2) {$u_2$};
\node at (lu3) {$u_3$};
\node at (lu4) {$u_4$};
\node at (lu5) {$u_5$};
\node at (lu6) {$u_6$};
\node at (lu7) {$u_7$};
\node at (lu8) {$u_8$};
\node at (lu9) {$u_9$};

\end{tikzpicture}

\caption{\label{fig:bipartite_permutation_graph}
Embedding of a bipartite permutation graph $(U,W,E)$ into a strip satysfying the adjacency and the enclosure properties.
}
\end{figure}

Our structural result asserts that, depending on the parity of the length of the shortest hole, a connected almost bipartite permutation graph may be naturally 
embedded in either a cylinder, or a M\"obius strip, locally satisfying adjacency and enclosure properties (see~Figure~\ref{fig:cylinder}).

\begin{figure}[htp!]
\pgfplotsset{compat=1.16}

\newcommand{\ptc}[3]{(axis cs:{(1+0.5*(0.35)*cos(#1+#2)},{(1+0.5*(0.35)*sin(#1+#2)},{0.5*(0.35+#3)})}
\newcommand{\pbc}[3]{(axis cs:{(1+0.5*(0.35)*cos(#1+#2)},{(1+0.5*(0.35)*sin(#1+#2)},{-0.5*(0.35+#3)})}

\newcommand{\pms}[3]{(axis cs:{(1+0.5*(0.35+#3)*cos((#1+#2)/2))*cos(#1+#2)},{(1+0.5*(0.35+#3)*cos((#1+#2)/2))*sin(#1+#2)},{(0.5*(0.35+#3)*sin((#1+#2)/2))})}

\centering
\tikzset{
    set arrow inside/.code={\pgfqkeys{/tikz/arrow inside}{#1}},
    set arrow inside={end/.initial=>, opt/.initial=},
    /pgf/decoration/Mark/.style={
        mark/.expanded=at position #1 with
        {
            \noexpand\arrow[\pgfkeysvalueof{/tikz/arrow inside/opt}]{\pgfkeysvalueof{/tikz/arrow inside/end}}
        }
    },
    arrow inside/.style 2 args={
        set arrow inside={#1},
        postaction={
            decorate,decoration={
                markings,Mark/.list={#2}
            }
        }
    },
} 
 \begin{tikzpicture}[xscale=0.8,yscale=0.9]
    \begin{axis}[
    hide axis,
    view = {0}{50}
    ]
    \addplot3 [
    surf,
    colormap={blackwhite}{gray(0cm)=(1); gray(1cm)=(0.75)},
    shader     = faceted interp,
    point meta = x,
    samples    = 40,
    samples y  = 3,
    z buffer   = sort,
    domain     = 0:360,
    y domain   =-0.35:0.35
    ] (
    {(1+0.5*y*cos(x/2)))*cos(x)},
    {(1+0.5*y*cos(x/2)))*sin(x)},
    {0.5*y*sin(x/2)}
    );
    \
   
    \addplot3 [color=black,
    domain     = -173:489.5,samples y=0,samples=2*(640/360)*24+1,
    ] (
    {(1+0.5*0.35*cos(x/2)))*cos(x)},
    {(1+0.5*0.35*cos(x/2)))*sin(x)},
    {0.5*0.35*sin(x/2)}
    ) [arrow inside={end=stealth,opt={black,scale=1.5}}{0.24,0.53,0.9}];

    \draw \pms{240}{0}{0} coordinate (u5);
    \draw \pms{256}{0}{0} coordinate (u6); 
    \draw \pms{272}{0}{0} coordinate (u7); 
    \draw \pms{288}{0}{0} coordinate (u8);
 
    \draw \pms{600}{0}{0} coordinate (w4);
    \draw \pms{614}{0}{0} coordinate (w5); 
    \draw \pms{628}{0}{0} coordinate (w6); 
    \draw \pms{642}{0}{0} coordinate (w7); 
    \draw \pms{656}{0}{0} coordinate (w8);
 
\begin{scope}[fill opacity=0.5]
\draw[rounded corners=8, fill=gray!60, thick] \pms{614}{-5}{.15}--\pms{628}{0}{.15}--\pms{642}{7}{.15}--\pms{256}{-1}{0.05}--cycle;
\draw[rounded corners=8, fill=gray!80, thick] \pms{628}{-5}{.1}--\pms{642}{5}{.1}--\pms{272}{-.5}{0.07}--cycle;
\end{scope}

\path (u5) edge[thick] (w4);
\path (u5) edge[thick] (w5);
\path (u5) edge[thick] (w6);

\path (u6) edge[thick] (w5);
\path (u6) edge[thick] (w6);
\path (u6) edge[thick] (w7);

\path (u7) edge[thick] (w6);
\path (u7) edge[thick] (w7);

\path (u8) edge[thick] (w7);
\path (u8) edge[thick] (w8);

  \tikzstyle{every node}=[circle,minimum size=3pt,inner sep=0pt,draw,fill]

\node at (w4) {};
\node at (w5) {};
\node at (w6) {};
\node at (w7) {};
\node at (w8) {};

\node at (u5) {};
\node at (u6) {};
\node at (u7) {};
\node at (u8) {};

     \end{axis}
 \end{tikzpicture}\qquad
 \begin{tikzpicture}[xscale=0.8,yscale=0.9]
    \begin{axis}[
    hide axis,
    view = {0}{50}
    ]
    \addplot3 [
    surf,
    colormap={blackwhite}{gray(0cm)=(1); gray(1cm)=(0.75)},
    shader     = faceted interp,
    point meta = x,
    samples    = 40,
    samples y  = 3,
    z buffer   = sort,
    domain     = 0:360,
    y domain   =-0.35:0.35
    ] (
    {(1+0.5*.35*cos(x)},
    {(1+0.5*.35*sin(x)},
    {0.5*y}
    );

    \addplot3 [color=black,
    domain     = 0:360,samples y=0,samples=2*(640/360)*24+1,
    ] (
    {(1+0.5*.35*cos(x)},
    {(1+0.5*.35*sin(x)},
    {0.5*.35}
    ) [arrow inside={end=stealth,opt={black,scale=1.5}}{0.58,0.92}];
    \addplot3 [color=black,
    domain     = 180:360,samples y=0,samples=2*(640/360)*24+1,
    ] (
    {(1+0.5*.35*cos(x)},
    {(1+0.5*.35*sin(x)},
    {-0.5*.35} 
    ) [arrow inside={end=stealth,opt={black,scale=1.5}}{0.16,0.84}];
    
    \addplot3 [color=black,
    domain     = 44.5:135.5,samples y=0,samples=2*(640/360)*24+1,
    ] (
    {(1+0.5*.35*cos(x)},
    {(1+0.5*.35*sin(x)},
    {-0.5*.35}
    );

    \draw \ptc{240}{0}{0} coordinate (u5);
    \draw \ptc{256}{0}{0} coordinate (u6); 
    \draw \ptc{272}{0}{0} coordinate (u7); 
    \draw \ptc{288}{0}{0} coordinate (u8);
 
    \draw \pbc{240}{0}{0} coordinate (w4);
    \draw \pbc{254}{0}{0} coordinate (w5); 
    \draw \pbc{268}{0}{0} coordinate (w6); 
    \draw \pbc{282}{0}{0} coordinate (w7); 
    \draw \pbc{296}{0}{0} coordinate (w8);
 
\begin{scope}[fill opacity=0.5]
\draw[rounded corners=8, fill=gray!60, thick] \pbc{254}{-5}{.08}--\pbc{268}{0}{.065}--\pbc{282}{7}{.08}--\ptc{256}{-1}{0.05}--cycle;
\draw[rounded corners=8, fill=gray!80, thick] \pbc{268}{-5}{.05}--\pbc{282}{5}{.05}--\ptc{272}{0}{0.05}--cycle;
\end{scope}

\path (u5) edge[thick] (w4);
\path (u5) edge[thick] (w5);
\path (u5) edge[thick] (w6);

\path (u6) edge[thick] (w5);
\path (u6) edge[thick] (w6);
\path (u6) edge[thick] (w7);

\path (u7) edge[thick] (w6);
\path (u7) edge[thick] (w7);

\path (u8) edge[thick] (w7);
\path (u8) edge[thick] (w8);

  \tikzstyle{every node}=[circle,minimum size=3pt,inner sep=0pt,draw,fill]

\node at (w4) {};
\node at (w5) {};
\node at (w6) {};
\node at (w7) {};
\node at (w8) {};

\node at (u5) {};
\node at (u6) {};
\node at (u7) {};
\node at (u8) {};

     \end{axis}
 \end{tikzpicture}
 
\caption{\label{fig:cylinder} Embedding of an almost bipartite permutation graph in a cylinder or M\"obius~strip.}

\end{figure}

Once we obtain such structure, we show that every minimal vertex cut that destroys all holes lies nearby a few consecutive vertices from the shortest hole.
This allows us to check all the possibilities where we can find a minimum cut.
Finally, we use a polynomial algorithm for finding maximum flow (and thus a minimum cut).

The approach used to prove Theorem \ref{thm_main_fpt} can be slightly modified to obtain a $9$-approxima\-tion algorithm for the bipartite permutation vertex deletion problem. We show the following.%
\begin{theorem}\label{thm_main_apx}
There exists a polynomial-time $9$-approximation algorithm for vertex deletion into bipartite permutation graphs problem.
\end{theorem}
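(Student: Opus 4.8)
The plan is to reuse the machinery developed for Theorem~\ref{thm_main_fpt} and convert the exact branching/cutting algorithm into an approximation scheme where, instead of branching over all possibilities, we always delete every vertex of the forbidden structure we currently see. First I would handle the constant-size forbidden induced subgraphs from the characterization of bipartite permutation graphs. In the \FPT algorithm these are eliminated by branching on which vertex of the obstruction to delete; here, since every such obstruction has size at most some fixed constant $c$, whenever we detect one we simply delete \emph{all} $c$ of its vertices. Because any optimal solution must delete at least one vertex from each such obstruction, this step deletes at most $c$ times the optimum contribution attributable to these obstructions. Repeating until no small obstruction remains runs in polynomial time and leaves an almost bipartite permutation graph.

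Next I would address the holes, which is where the structural theorem does the work. Once the graph is almost bipartite permutation, the earlier analysis gives the cylinder/M\"obius embedding and, crucially, the statement that every minimal hole-destroying cut lies near a few consecutive vertices of the shortest hole. The approximation version should, rather than enumerate all candidate positions for a minimum cut and pick the best (as the exact algorithm does via max-flow), compute \emph{some} small cut guaranteed by the structure and delete it. The key quantitative point is that the number of vertices deleted at each such step is bounded by a constant times the number that any optimal solution is forced to spend to eliminate that same hole; summing over the disjoint (or appropriately charged) regions yields the overall factor $9$. I would make the charging argument explicit by associating each batch of deleted vertices to a witness obstruction that the optimum must also hit, ensuring the deletions we make and the deletions the optimum makes can be compared region by region.

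To keep the factor at exactly $9$ rather than some larger constant, I expect the bookkeeping must be done carefully so that the worst case among the two phases---the small-obstruction deletions and the hole-cut deletions---both respect the same ratio. The cleanest route is to show that every maximal obstruction we greedily remove, whether a constant-size forbidden subgraph or a locally-minimal hole cut, contains at most $9$ vertices while forcing the optimum to delete at least one of them; disjointness of the charged witnesses (or a constant-multiplicity overlap absorbed into the constant) then gives $|S_{\mathrm{apx}}| \leq 9\,|S_{\mathrm{opt}}|$.

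The main obstacle will be the hole phase: unlike the constant-size obstructions, a hole is large, so we cannot afford to delete all its vertices. We must invoke the structural result that a \emph{bounded} number of vertices near the shortest hole suffices to break it, and argue that this bounded set is charged correctly against the optimum. Establishing that the local cuts we greedily take do not interfere destructively with one another---so that the charging remains valid globally across the cylinder or M\"obius embedding---is the delicate step, and it is exactly the place where the structural understanding of the shortest hole from the proof of Theorem~\ref{thm_main_fpt} must be leveraged to guarantee both the constant bound on each deletion batch and the comparison with the optimal solution.
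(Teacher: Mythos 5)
Your first phase---deleting \emph{all} (at most $9$) vertices of every small forbidden induced subgraph that is found, and charging each batch to the at least one vertex that any solution must take from it---is exactly what the paper does, and in the paper's argument it is the \emph{sole} source of the factor $9$. The genuine gap is in your hole phase. You propose to greedily take ``some small cut guaranteed by the structure'' and charge it locally against the optimum, asserting that a \emph{bounded} number of vertices near the shortest hole suffices to break it. That assertion is false as stated: Proposition~\ref{prop_min_cut} only says that every minimal hole cut is \emph{contained in} $V[i-2,i+2]$ for some $i$; the sets $A_j$ and $B_j$ can be arbitrarily large, so no hole cut of constant size is guaranteed to exist, and your bookkeeping scheme (``every batch we delete has at most $9$ vertices while the optimum must take one of them'') cannot cover this phase. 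The region-by-region charging and the non-interference of local cuts that you flag as the delicate step are indeed not established by your sketch, and in fact they are not needed at all.

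The paper sidesteps this entirely: on the almost bipartite permutation graph $G-Z$ obtained after the first phase, a \emph{minimum} hole cut $X$ can be computed exactly in polynomial time (Proposition~\ref{prop_min_cut} restricts where a minimal cut can live, and Lemma~\ref{lem_max_flow} finds a minimum one by running max-flow over the $\Oh{m}$ candidate windows). For any solution $Y$, the set $Y\setminus Z$ is itself a hole cut of $G-Z$, hence $|X|\le|Y\setminus Z|$; the second phase is therefore lossless (approximation factor $1$), and the total bound is $|Z|\le 9|Z\cap Y|+|Y\setminus Z|\le 9|Y|$. To repair your proof, replace the greedy local cutting and the global charging argument by this exact minimum-hole-cut computation; once you do, the only ratio to control is the factor $9$ from the constant-size obstructions, and the rest of your outline goes through.
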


\section{Preliminaries}

Unless stated otherwise, all graphs considered in this work are simple, i.e., undirected, with no loops and parallel edges.
Let $G=(V,E)$ be a graph.
For a subset $S \subseteq V$, the subgraph of $G$ induced by $S$ is the graph $G[S] = (S, \{uv \mid uv \in E, u,v\in S\})$.
The \emph{neighborhood} of a vertex $u \in V$ is the set $N(u) = \{v \in V \mid uv \in E\}$.
Similarly, we write $N(U)=\bigcup_{u\in U} N(u)\setminus U$ for a set $U\subseteq V$. 
Let $u,v\in V$. 
We say that $u$ and $v$ \emph{are at distance~$k$ (in $G$)} if $k$ is the length of a shortest path between $u$ and $v$ in $G$.
We denote a complete graph and a cycle on $n$ vertices by $K_n$ and $C_n$, respectively.
By \emph{hole} we mean an induced cycle on at least five vertices.
We say that a hole is \emph{even} (or \emph{odd}) if it contains even (odd) number of vertices, respectively.

For a graph $G=(V,E)$, a pair $(V,{<})$ is a \emph{transitive orientation} of $G$ if ${<}$ is a transitive and irreflexive relation on $V$ that satisfies
either $u < v$ or $v < u$ iff $uv \in E$ for every $u,v \in V$.

A \emph{partially ordered set} (shortly \emph{partial order} or \emph{poset}) is a pair $P = (X,{\leq_P})$ that consists of a set $X$ and a reflexive, transitive, and antisymmetric relation ${\leq_P}$ on $X$.
For a poset $(X,{\leq_P})$, let the \emph{strict partial order} $<_P$ be a binary relation defined on $X$ such that $x <_P y$ if and only if $x \leq_P y$ and $x \neq y$. Equivalently, $(X,{<_P})$ is a strict partial order if $<_P$ is irreflexive and transitive.
Two elements $x,y \in X$ are \emph{comparable} in $P$ if $x \leq_P y$ or $y \leq_P x$; otherwise, $x,y$ are \emph{incomparable} in $P$.
A \emph{linear order} $L=(X,{\leq_L})$ is a partial order in which for every $x,y \in X$ we have $x \leq_L y$ or $y \leq_L x$.
A \emph{strict linear order} $(X,<_L)$ is a binary relation defined in a way that $x <_L y$ if and only if $x \leq_L y$ and $x \neq y$. 

Let $P=(X,{\leq_P})$ be a poset. A linear order $L=(X,{\leq_L})$ is called a \emph{linear extension} of $P$ if ${\leq_{P}} \subseteq {\leq_L}$.
Given a family of posets $\mathcal{P} = \{P_i = (X,{\leq_{P_i}}): i \in I\}$, we say that $P$ is the \emph{intersection of $\mathcal{P}$} if
for every $x,y \in X$ we have $x \leq_P y$ if and only if $x \leq_{P_i} y$ for every $i \in I$.  
The \emph{dimension} of a poset $P$ is the minimal number of linear extensions of $P$ that intersect to $P$.
We say that $P$ is \emph{two-dimensional} if it is the intersection of two linear extensions of~$P$.

A \emph{comparability graph} (\emph{incomparability graph}) of a poset $P=(X,{\leq_P})$ has $X$ as the set of its vertices and the set including every two vertices comparable (incomparable, respectively) in
$P$ as the set of its edges.
Note the following: if $(X,{\leq_P})$ is a poset, then $(X, {<_P})$ is a transitive orientation of the comparability graph of $P$.
A graph $G=(V,E)$ is a \emph{comparability graph} (\emph{co-comparability graph}) if $G$ is a comparability (incomparability, respectively) graph of some poset defined on $V$.
So, $G$ is a comparability graph if and only if $G$ admits a transitive orientation.
A graph $G$ is a \emph{permutation graph} if and only if $G$ and the complement of $G$ are comparability graphs \cite{pnueli1971transitive} (or equivalently, $G$ and the complement of $G$ admit transitive orientations).
Baker, Fishburn, and Roberts  \cite{baker1972partial} proved that $G$ is a permutation graph if and only if $G$ is the incomparability graph of a two-dimensional poset.

We say that two sets $X$ and $Y$ are \emph{comparable} if $X$ and $Y$ are comparable with respect to $\subseteq$-relation (that is, $X \subseteq Y$ or $Y \subseteq X$ holds).
We use the convenient notation $[m]:=\{0,1,\dots,m\},$ for every $m \in \mathbb{N}$.
For every $i,j \in \mathbb{Z}$ such that $i \leq j$ by $[i,j]$ we mean the set $\{i,i+1,\ldots,j\}$.

\section{The structure of (almost) bipartite permutation graphs}\label{sec_structure}
\label{sec:locally_bipartite_permutation_graphs}

The characterization of bipartite permutation graphs presented below was proposed by
Spinrad, Brandst\"{a}dt, and Stewart~\cite{SBS87}.

Suppose $G=(U,W,E)$ is a connected bipartite graph.
A linear order $(W,{<_W})$ satisfies \emph{adjacency property} 
if for each vertex $u \in U$ the set $N(u)$ consists of vertices that are consecutive in $(W,{<_W})$.
A linear order $(W,{<_W})$ satisfies \emph{enclosure property}
if for every pair of vertices $u,u' \in U$ such that $N(u)$ is a subset of $N(u')$, vertices in $N(u') - N(u)$ occur consecutively in $(W,{<_W})$.
A \emph{strong ordering} of the vertices of $U \cup W$ consists of 
linear orders $(U,{<_U})$ and $(W,{<_W})$ such that for every
$(u,w'), (u',w)$ in $E$, where $u,u'$ are in $U$ and $w,w'$ are in $W$,
$u <_U u'$ and $w <_W w'$ imply $(u,w) \in E$ and $(u',w') \in E$.
Note that, whenever $(U,{<_U})$ and $(W,{<_W})$ form a strong ordering of $U \cup W$, 
then $(U,{<_U})$ and $(W,{<_W})$ satisfy the adjacency and enclosure properties.
\begin{theorem}[Spinrad, Brandst\"{a}dt, Stewart~\cite{SBS87}]\label{thm:bip_char} The following three statements are equivalent for a connected bipartite graph $G=(U,W,E)$:
\begin{enumerate}
\item \label{thm:bip_char_1} $(U,W,E)$ is a bipartite permutation graph.
\item \label{thm:bip_char_2} There exists a strong ordering of $U \cup W$.
\item \label{thm:bip_char_3} There exists a linear order $(W,{<_W})$ of $W$ satisfying adjacency and enclosure properties.
\end{enumerate}
\end{theorem}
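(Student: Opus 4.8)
The plan is to prove the cyclic chain of implications $(\ref{thm:bip_char_1})\Rightarrow(\ref{thm:bip_char_2})\Rightarrow(\ref{thm:bip_char_3})\Rightarrow(\ref{thm:bip_char_1})$. The middle step $(\ref{thm:bip_char_2})\Rightarrow(\ref{thm:bip_char_3})$ is exactly the observation already recorded just before the statement, so here the work is only to justify that observation. Given a strong ordering $(U,{<_U})$, $(W,{<_W})$, both adjacency and enclosure follow by a short case analysis. For adjacency, suppose $w_1<_W w_2<_W w_3$ with $w_1,w_3\in N(u)$ but $w_2\notin N(u)$; by connectedness $w_2$ has a neighbor $u'$, and comparing $u$ with $u'$ in $<_U$ and applying the strong-ordering condition to the appropriate crossing pair forces $(u,w_2)\in E$, a contradiction. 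Enclosure is handled analogously, testing the witnesses of $N(u')-N(u)$ against $u$.

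For $(\ref{thm:bip_char_1})\Rightarrow(\ref{thm:bip_char_2})$ I would invoke the result of Baker, Fishburn, and Roberts stated above: a permutation graph $G$ is the incomparability graph of a two-dimensional poset $P$, the intersection of two linear extensions $L_1,L_2$, where two vertices are adjacent in $G$ exactly when they are ordered oppositely in $L_1$ and $L_2$. Since $G$ is bipartite, $U$ and $W$ are independent sets, so their vertices are pairwise comparable in $P$; thus each of $U,W$ is a chain in $P$, and $L_1,L_2$ agree on it. Let $<_U$ and $<_W$ be these common induced orders. To verify the strong-ordering condition, encode each vertex $v$ by its position pair $(a_v,b_v)$ in $(L_1,L_2)$, so that adjacency is $(a_v-a_{v'})(b_v-b_{v'})<0$. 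The hypotheses $u<_U u'$, $w<_W w'$, $(u,w'),(u',w)\in E$ translate into sign constraints on these positions, and a routine finite sign analysis yields $(u,w),(u',w')\in E$.

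The heart of the argument is $(\ref{thm:bip_char_3})\Rightarrow(\ref{thm:bip_char_1})$. Starting from $(W,{<_W})$ with the two properties, adjacency makes every neighborhood an interval $N(u)=\{w:\ell(u)\le_W w\le_W r(u)\}$, where $\ell(u)=\min_{<_W}N(u)$ and $r(u)=\max_{<_W}N(u)$. The key lemma, extracted from enclosure, is that whenever $N(u)\subseteq N(u')$ the two intervals share an endpoint, i.e.\ $\ell(u)=\ell(u')$ or $r(u)=r(u')$; otherwise $N(u')-N(u)$ would split into two nonempty blocks separated by the nonempty $N(u)$, contradicting that it is consecutive. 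This lemma rules out strictly two-sided nesting and hence shows that the order of $U$ by left endpoints $\ell$ and the order by right endpoints $r$ are mutually consistent, so they define a single linear order $<_U$ along which both $\ell$ and $r$ are monotone. I would then build a permutation diagram: a top order $L_1$ placing each $w$ at its $<_W$-rank and each $u$ just left of $\ell(u)$, and a bottom order $L_2$ placing each $w$ at its $<_W$-rank and each $u$ just right of $r(u)$, with ties inside $U$ broken by $<_U$. Monotonicity of $\ell,r$ guarantees $L_1$ and $L_2$ agree on $U$ and on $W$ (so no two $U$-segments and no two $W$-segments cross), while for $u\in U$, $w\in W$ the segments cross precisely when $\ell(u)\le_W w\le_W r(u)$, that is, when $w\in N(u)$. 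Thus $G$ is the intersection graph of these segments, a permutation graph, and being bipartite it is a bipartite permutation graph.

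I expect the main obstacle to be precisely this last construction: proving the endpoint-sharing lemma from enclosure and then checking that the chosen interleaving of $U$ and $W$ on the two lines simultaneously keeps each part crossing-free and reproduces exactly the bipartite edges. The $(\ref{thm:bip_char_1})\Rightarrow(\ref{thm:bip_char_2})$ sign analysis is routine once the chain structure of $U$ and $W$ is observed, and $(\ref{thm:bip_char_2})\Rightarrow(\ref{thm:bip_char_3})$ is a direct consequence of the strong-ordering condition; the delicate point throughout is that nested neighborhoods are permitted but must share an endpoint, which is exactly what enclosure provides.
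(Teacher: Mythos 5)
This theorem is not proved in the paper at all: it is quoted as a known result of Spinrad, Brandst\"adt, and Stewart \cite{SBS87}, so there is no in-paper argument to compare your proposal against. That said, your reconstruction is correct and follows the same general lines as the standard proof: (2)$\Rightarrow$(3) is a direct case analysis on the strong-ordering condition (your adjacency argument needs both cases $u<_U u'$ and $u'<_U u$, and both close); (1)$\Rightarrow$(2) via the two-linear-extension model works because bipartiteness forces $U$ and $W$ to be chains of the poset, on which $L_1$ and $L_2$ agree, after which the sign analysis is a finite check; and (3)$\Rightarrow$(1) by an explicit permutation diagram is the intended construction. The one step you should write out carefully is the passage from the endpoint-sharing lemma to the linear order $<_U$: the clean formulation is that the relation ``$\ell(u)\le_W\ell(u')$ and $r(u)\le_W r(u')$'' is a total preorder on $U$ --- totality is exactly the exclusion of two-sided strict nesting, which your lemma supplies (note the lemma needs $N(u)\neq\emptyset$, guaranteed by connectedness), and transitivity is immediate; ties may then be broken arbitrarily. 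With that in place your crossing analysis is sound: the induced order on $U$ on each of the two lines coincides with $<_U$ because $\ell$ and $r$ are both monotone along it, so $U$-segments are pairwise non-crossing, $W$-segments likewise, and a $u$-segment crosses a $w$-segment precisely when $\ell(u)\le_W w\le_W r(u)$, i.e., when $w\in N(u)$ by adjacency. I see no gap beyond these points of presentation.
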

An example of a bipartite permutation graph $G=(U,W,E)$ with linear order $w_1 <_W w_2 <_W\ldots <_W w_8 <_W w_9$ of the vertices of $W$ which satisfies the adjacency and the enclosure properties is shown in Figure \ref{fig:bipartite_permutation_graph}.

Another characterization of bipartite permutation graphs can be obtained by listing all minimal forbidden induced subgraphs for this class of graphs.
Such a list can be compiled by taking all odd cycles of length $\geq 3$ (forbidden structures for bipartite graphs) and all bipartite graphs from the list of forbidden structures for permutation graphs obtained by Gallai~\cite{Gal67}. 
The whole list is shown in Figure~\ref{fig:bp_forbidden_structures}.
\begin{figure}[htp!]
\centering
\begin{tikzpicture}[xscale=0.8,yscale=1]
\coordinate (x1) at (0,0) {};
\coordinate (x2) at (1,0) {};
\coordinate (x3) at (2,0) {};
\coordinate (x4) at (3,0) {};
\coordinate (y1) at (0.5,1) {};
\coordinate (y2) at (1.5,1) {};
\coordinate (y3) at (2.25,1) {};
\coordinate (l) at (1.5,-0.5) {};

\tikzstyle{every node}=[circle,minimum size=5pt,inner sep=0pt,draw,fill]
\node at (x1) {};
\node at (x2) {};
\node at (x3) {};
\node at (x4) {};

\node at (y1) {};
\node at (y2) {};
\node at (y3) {};

\tikzstyle{every node}=[inner sep=1pt]

\path (y1) edge[thick] (x1);
\path (y1) edge[thick] (x2);

\path (y2) edge[thick] (x2);
\path (y2) edge[thick] (x3);

\path (y3) edge[thick] (x2);
\path (y3) edge[thick] (x4);

\begin{footnotesize}
\tikzstyle{every node}=[inner sep=2pt]
\node at (l) {$T_2$};
\end{footnotesize}
\end{tikzpicture}
\hspace{0.65cm}
\begin{tikzpicture}[xscale=0.8,yscale=1]
\coordinate (x1) at (0,0) {};
\coordinate (x2) at (1,0) {};
\coordinate (y1) at (0,1) {};
\coordinate (y2) at (1,1) {};
\coordinate (ay1) at (-1,0) {};
\coordinate (ax1) at (-1,1) {};
\coordinate (ax2) at (2,1) {};
\coordinate (l) at (0.5,-0.5) {};

\tikzstyle{every node}=[circle,minimum size=5pt,inner sep=0pt,draw,fill]
\node at (x1) {};
\node at (x2) {};
\node at (y1) {};
\node at (y2) {};

\node at (ax1) {};
\node at (ax2) {};
\node at (ay1) {};

\tikzstyle{every node}=[inner sep=1pt]

\path (y1) edge[thick] (x1);
\path (y1) edge[thick] (x2);
\path (y2) edge[thick] (x1);
\path (y2) edge[thick] (x2);

\path (x1) edge[thick] (ax1);
\path (x2) edge[thick] (ax2);
\path (y1) edge[thick] (ay1);

\begin{footnotesize}
\tikzstyle{every node}=[inner sep=2pt]
\node at (l) {$X_2$};
\end{footnotesize}
\end{tikzpicture}
\hspace{0.65cm}
\begin{tikzpicture}[xscale=0.8,yscale=1]
\coordinate (x1) at (0,0) {};
\coordinate (x2) at (1,0) {};
\coordinate (x3) at (2.5,0) {};
\coordinate (y1) at (0,1) {};
\coordinate (y2) at (1,1) {};
\coordinate (y3) at (2.5,1) {};
\coordinate (ax2) at (1.75,1) {};
\coordinate (l) at (1.25,-0.5) {};

\tikzstyle{every node}=[circle,minimum size=5pt,inner sep=0pt,draw,fill]
\node at (x1) {};
\node at (x2) {};
\node at (x3) {};
\node at (y1) {};
\node at (y2) {};
\node at (y3) {};
\node at (ax2) {};

\tikzstyle{every node}=[inner sep=1pt]

\path (x1) edge[thick] (y1);
\path (x1) edge[thick] (y2);
\path (x2) edge[thick] (y1);
\path (x2) edge[thick] (y2);
\path (x2) edge[thick] (y3);
\path (x3) edge[thick] (y2);
\path (x3) edge[thick] (y3);

\path (x2) edge[thick] (ax2);

\begin{footnotesize}
\tikzstyle{every node}=[inner sep=2pt]
\node at (l) {$X_3$};
\end{footnotesize}
\end{tikzpicture}
\hspace{0.65cm}
\begin{tikzpicture}[xscale=0.8,yscale=1]
\coordinate (x1) at (-0.25,0) {};
\coordinate (x2) at (1,0) {};
\coordinate (x3) at (2.25,0) {};
\coordinate (y1) at (-0.25,1) {};
\coordinate (y2) at (1,1) {};
\coordinate (y3) at (2.25,1) {};
\coordinate (l) at (1,-0.5) {};

\tikzstyle{every node}=[circle,minimum size=5pt,inner sep=0pt,draw,fill]
\node at (x1) {};
\node at (x2) {};
\node at (x3) {};
\node at (y1) {};
\node at (y2) {};
\node at (y3) {};

\tikzstyle{every node}=[inner sep=1pt]

\path (x1) edge[thick] (y3);
\path (x1) edge[thick] (y1);
\path (x2) edge[thick] (y1);
\path (x2) edge[thick] (y2);
\path (x3) edge[thick] (y2);
\path (x3) edge[thick] (y3);

\begin{footnotesize}
\tikzstyle{every node}=[inner sep=2pt]
\node at (l) {$\text{$C_{2k}$ for $k \geq 3$}$};
\end{footnotesize}
\end{tikzpicture}

\vspace{0.2cm}
\begin{tikzpicture}[xscale=0.8,yscale=1]
\coordinate (x1) at (-0.5,0) {};
\coordinate (x2) at (1,0) {};
\coordinate (y1) at (0.25,1) {};
\coordinate (l) at (0.25,-0.5) {};

\tikzstyle{every node}=[circle,minimum size=5pt,inner sep=0pt,draw,fill]
\node at (x1) {};
\node at (x2) {};
\node at (y1) {};

\tikzstyle{every node}=[inner sep=1pt]

\path (x1) edge[thick] (x2);
\path (x1) edge[thick] (y1);
\path (x2) edge[thick] (y1);
\begin{footnotesize}
\tikzstyle{every node}=[inner sep=2pt]
\node at (l) {$K_3$};
\end{footnotesize}
\end{tikzpicture}
\hspace{1cm}
\begin{tikzpicture}[xscale=0.8,yscale=1]
\coordinate (x1) at (-0.25,0) {};
\coordinate (x2) at (1,0) {};
\coordinate (z1) at (2.25,0.5) {};
\coordinate (y1) at (-0.25,1) {};
\coordinate (y2) at (1,1) {};
\coordinate (l) at (1,-0.5) {};

\tikzstyle{every node}=[circle,minimum size=5pt,inner sep=0pt,draw,fill]
\node at (x1) {};
\node at (x2) {};
\node at (y1) {};
\node at (y2) {};
\node at (z1) {};

\tikzstyle{every node}=[inner sep=1pt]

\path (x1) edge[thick] (x2);
\path (x2) edge[thick] (z1);
\path (z1) edge[thick] (y2);
\path (y2) edge[thick] (y1);
\path (y1) edge[thick] (x1);

\begin{footnotesize}
\tikzstyle{every node}=[inner sep=2pt]
\node at (l) {$\text{$C_{2k+1}$ for $k \geq 2$}$};
\end{footnotesize}
\end{tikzpicture}
\caption{\label{fig:bp_forbidden_structures} 
Forbidden structures for bipartite permutation graphs.}
\end{figure}
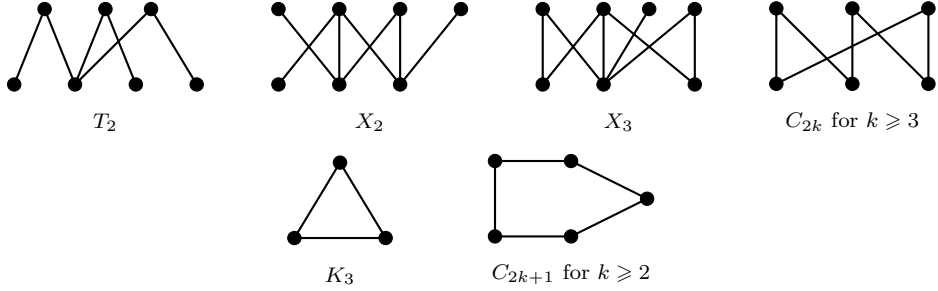

\subsection{Almost bipartite permutation graphs}
\label{sec:almost_bipartite_permutation}
The goal of this section is to characterize graphs which do not contain small 
forbidden subgraphs for the class of bipartite permutation graphs.
Following terminology of van 't Hof and Villanger \cite{HV13} we call such graphs almost bipartite permutation graphs. 
\begin{definition}
A graph $G=(V,E)$ is \emph{almost bipartite permutation graph} if $G$ does not contain $T_2$, $X_2$, $X_3$, $K_3$, $C_k$ for $k \in [5,9]$ as induced subgraphs. %
\end{definition}

Suppose $G=(V,E)$ is a connected almost bipartite permutation graph.
\begin{proposition} \label{prop:C_dominating_set}
Every hole in $G$ is a dominating set.
\end{proposition}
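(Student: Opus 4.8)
The plan is to argue by contradiction. Suppose $H=h_0h_1\cdots h_{m-1}$ is a hole that is \emph{not} a dominating set. Since $G$ contains no $C_k$ with $k\in[5,9]$, every hole has length $m\ge 10$, and since $K_3$ is forbidden, $G$ is triangle-free. Let $z\notin V(H)$ be a vertex with no neighbour in $V(H)$. As $G$ is connected, I take a shortest path from $z$ to $V(H)$, say $z=p_0,p_1,\dots,p_\ell$ with $p_\ell\in V(H)$ and $p_0,\dots,p_{\ell-1}\notin V(H)$; because $z$ has no neighbour on $H$, we have $\ell\ge 2$. Set $x:=p_{\ell-1}$ and $y:=p_{\ell-2}$. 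Then $x\notin V(H)$ has a neighbour $h:=p_\ell$ on $H$, while $y$ is a neighbour of $x$ with \emph{no} neighbour on $H$ — otherwise $p_0,\dots,p_{\ell-2}$ could reach $V(H)$ in fewer steps, contradicting minimality. Write $A:=N(x)\cap V(H)$; thus $h\in A\neq\varnothing$, and no two vertices of $A$ are consecutive on $H$ (else they would form a triangle with $x$).

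The first key step is a gap estimate. Let $h_i,h_j\in A$ be consecutive along one of the arcs of $H$ they determine, so that $x$ has no neighbour among the interior vertices of that arc, and let $d$ be the number of edges of the arc. Then $x,h_i,h_{i+1},\dots,h_j,x$ is an induced cycle of length $d+2$: the arc is chordless because $H$ is induced, and $x$ sees only its endpoints. Triangle-freeness gives $d\ge 2$, and since $C_5,\dots,C_9$ are forbidden we must have $d+2\notin\{5,\dots,9\}$. Hence every gap between consecutive neighbours of $x$ on $H$ has length exactly $2$ or at least $8$.

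Now I distinguish two cases. Suppose first that some $h_i\in A$ satisfies $h_{i-2},h_{i-1},h_{i+1},h_{i+2}\notin A$; by the gap estimate this holds whenever both gaps at $h_i$ have length $\ge 8$, and in particular whenever $|A|=1$. Then on the seven distinct vertices $h_{i-2},h_{i-1},h_i,h_{i+1},h_{i+2},x,y$ the induced subgraph is exactly the spider with centre $h_i$ and three legs $(h_{i-1},h_{i-2})$, $(h_{i+1},h_{i+2})$, $(x,y)$, i.e.\ the subdivided claw $T_2$ — a contradiction. In the remaining case every vertex of $A$ has a gap of length $2$ on some side, so there is a pair $h_i,h_{i+2}\in A$ with $h_{i+1}\notin A$. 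Since $h_{i-1},h_{i+1},h_{i+3}\notin A$ and $y$ has no neighbour on $H$, the seven distinct vertices $h_{i-1},h_i,h_{i+1},h_{i+2},h_{i+3},x,y$ induce a $4$-cycle $x,h_i,h_{i+1},h_{i+2}$ carrying pendant vertices $h_{i-1}$, $h_{i+3}$ and $y$ on $h_i$, $h_{i+2}$ and $x$ respectively; this configuration contains one of the forbidden graphs $X_2,X_3$, again a contradiction. Hence no such $z$ exists and $H$ dominates $G$.

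I expect the main obstacle to lie in the second case. There one must verify that the seven chosen vertices are pairwise distinct (this is where $m\ge 10$ is needed), that no unwanted edges appear, and then match the resulting $4$-cycle-with-pendants precisely against $X_2$ and $X_3$ rather than against a mere superset of them; one should also confirm that the gap-$2$ dichotomy leaves no further subcases, for instance when several gap-$2$ pairs overlap. By contrast the first case is robust, since the spider $T_2$ is forced as soon as $x$ has a neighbour on $H$ flanked by two long gaps.
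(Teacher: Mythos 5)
Your argument is correct and follows essentially the same route as the paper: pick a vertex $x$ with a neighbour on the hole and a neighbour $y$ at distance two from the hole, use $T_2$-freeness to force a second neighbour of $x$ at distance two along the hole, and then exhibit a forbidden seven-vertex induced subgraph. The configuration you reach in your second case (a $C_4$ with pendant vertices attached to three of its four vertices) is exactly $X_2$, the same contradiction the paper derives, so your hedge between $X_2$ and $X_3$ resolves to $X_2$; the gap estimate is a harmless extra not needed for the argument.
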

\begin{proof}
Let $C=\{c_0,c_1,\dots,c_{m-1}\}$ be a hole in $G$. 
Hence, $m \geq 10.$
Suppose, for contradiction, that there exists a vertex in the set $V \setminus (C\cup N(C))$. 
As $G$ is connected, there must exist $v \in V$ at distance two from $C$.
Let $w\in N(v)\cap N(C)$ and let $c_j$ be a neighbor of $w$ in $C$.
We now look at the neighborhood of $w$. 
As $G$ contains no triangle, $wc_{j-1}$ and $wc_{j+1}$ are non-edges.
Moreover, as $G$ contains no copy of $T_2$, vertex $w$ is adjacent to at least one of $c_{j-2}$ and $c_{j+2}$, say $c_{j-2}$.
Thus, $w$ is nonadjacent to $c_{j-3}$.
Therefore, the set $\{c_{j-3},c_{j-2},c_{j-1},c_{j},c_{j+1},w,v\}$ induces a copy of $X_2$ in $G$, which leads to a contradiction.
\end{proof}

Let $C$ be a shortest hole in $G$, $m$ be the size of $C$, and $c_0,c_1,\ldots,c_{m-1}$ be the consecutive vertices of $C$, $m \geq 10$.
In the remaining part of the paper we use the following notation with respect to $C$.
For any integral number $i$ by $c_i$ we denote the unique vertex $c_{i \bmod m}$ from the cycle~$C$.
For any two different vertices $c_i,c_j$ in $C$, by \emph{the set of all vertices between $c_i$ and $c_j$ from $C$} we mean the set
$\{c_i,c_{i+1},\ldots,c_{i+k}\}$, where $k$ is the smallest natural number such that $c_{i+k}=c_j$.
Note that this notion is not symmetric, i.e., the set of all vertices between $c_j$ and $c_i$ from $C$ contains $c_i,c_j$ and all the vertices from $C$
that are not between $c_i$ and $c_j$.
\begin{proposition}
\label{prop:C_neighbors}
For every vertex $v \in V$ either:
\begin{enumerate}[(1)]
 \item \label{item:one-neigh} $N(v) \cap C = \{c_i\}$ for some $i \in [m-1]$, or
 \item \label{item:two-neigh} $N(v) \cap C = \{c_{i},c_{i+2}\}$ for some $i \in [m-1]$.
\end{enumerate}
\end{proposition}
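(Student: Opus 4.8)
The plan is to analyze $N(v)\cap C$ according to how many neighbors $v$ has on the shortest hole $C$. First I would dispose of the case $v\in C$: since $C$ is chordless, a cycle vertex $c_i$ satisfies $N(c_i)\cap C=\{c_{i-1},c_{i+1}\}=\{c_{i-1},c_{(i-1)+2}\}$, which is already of type~(\ref{item:two-neigh}). So assume $v\notin C$. By Proposition~\ref{prop:C_dominating_set} the hole $C$ dominates $G$, hence $v$ has at least one neighbor on $C$; and since $G$ is $K_3$-free, no two neighbors of $v$ on $C$ can be consecutive on the cycle. In particular, if $v$ has a single neighbor we are in case~(\ref{item:one-neigh}), so the real task is to bound the number of neighbors and to locate them.

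The key device is the following. List the neighbors of $v$ on $C$ in cyclic order, and for two cyclically consecutive ones $c_a,c_b$ look at the arc of $C$ between them. If this arc spans $k$ edges (so $c_b=c_{a+k}$ and $v$ has no further neighbor among $c_{a+1},\dots,c_{a+k-1}$), then $v,c_a,c_{a+1},\dots,c_{a+k}$ induce a cycle of length $k+2$, because the arc is a chordless subpath of $C$ and $v$ meets it only at its endpoints. Next I would classify the admissible values of $k$. Since $v$ is adjacent to no two consecutive vertices, $k\geq 2$; the value $k=2$ yields an induced $C_4$, which is harmless. Any $k$ with $k+2\in[5,9]$, i.e.\ $k\in[3,7]$, is excluded because $G$ contains no induced $C_5,\dots,C_9$. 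Finally $k+2\geq 10$ produces an induced hole of length $k+2$, and by minimality of $C$ this forces $k+2\geq m$, i.e.\ $k\geq m-2$. Hence every arc between cyclically consecutive neighbors of $v$ has length exactly $2$ or at least $m-2$.

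A counting argument then pins down the structure. If $v$ has $t$ neighbors, the arc lengths $k_1,\dots,k_t$ are all at least $2$ and sum to $m$. If some $k_j\geq m-2$, then since the remaining $t-1$ arcs already contribute at least $2(t-1)$, the identity $\sum k_j=m$ forces $t=2$ with arc lengths $\{2,m-2\}$; the two neighbors are then $c_i,c_{i+2}$, which is case~(\ref{item:two-neigh}) (the long arc merely reproduces another length-$m$ hole, so no contradiction arises). The only remaining possibility is that no arc is long, whence every arc has length exactly $2$, so $m$ is even, $m=2t$, and by $m\geq 10$ we get $t\geq 5$, with $v$ adjacent to every second vertex of $C$, say $c_0,c_2,c_4,\dots,c_{m-2}$. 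This configuration is the main obstacle, and I would rule it out by exhibiting an induced $X_2$: using $m\geq 10$, the seven vertices $c_{m-1},c_0,c_1,c_2,c_3,v,c_6$ induce the path $c_{m-1}c_0c_1c_2c_3$, together with $v$ adjacent to exactly $c_0$ and $c_2$ (the second and fourth path vertices) and with $c_6$ pendant on $v$. One checks that $v$ misses $c_{m-1},c_1,c_3$ by parity, and that $c_6$ is adjacent to none of the five path vertices precisely because $m\geq 10$ keeps its $C$-neighbors $c_5,c_7$ away from $\{c_{m-1},c_0,c_1,c_2,c_3\}$. This is exactly the graph $X_2$, contradicting that $G$ is an almost bipartite permutation graph.

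The delicate points I would verify carefully are the index and parity bookkeeping modulo $m$ in the last step, which is exactly where the hypothesis $m\geq 10$ is used to guarantee the required nonadjacencies, and the clean separation of the arc-length cases. Once the arc classification and the $X_2$ configuration are established, the proposition follows by collecting the cases $t=1$ (type~(\ref{item:one-neigh})) and $t=2$ (type~(\ref{item:two-neigh})), the case $t\geq 3$ having been excluded.
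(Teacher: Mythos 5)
Your proof is correct and follows essentially the same route as the paper's: reduce to $v\notin C$ via domination, exclude arcs of intermediate length between consecutive neighbors using the absence of $C_5,\dots,C_9$ together with the minimality of $C$, and then eliminate the remaining ``adjacent to every second vertex'' configuration by exhibiting a small forbidden induced subgraph. The only (immaterial) difference is that the paper kills that last configuration with an induced $X_3$ on $\{c_0,c_1,c_2,c_3,c_4,c_6,v\}$, whereas you use an induced $X_2$ on $\{c_{m-1},c_0,c_1,c_2,c_3,v,c_6\}$; both verifications are valid, and your arc-counting step is in fact spelled out more explicitly than in the paper.
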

\begin{proof}
Since $C$ is an induced cycle, \eqref{item:two-neigh} clearly holds for the vertices from $C$, so let $v$ be a vertex in $V\setminus C$.
As $C$ is a dominating set, by Proposition~\ref{prop:C_dominating_set}, vertex $v$ has at least one neighbor in $C$.
If $v$ has exactly one neighbor in $C$, then \eqref{item:one-neigh} holds and we are done.
So assume that it has more than one neighbor. 
We now distinguish two cases.
First, suppose that there exist two vertices $c_j, c_\ell \in N(v)\cap C$ at distance at least three in $C$ such that  
$v$ has no neighbor in the set of vertices between $c_j$ and $c_l$, except $c_j$ and $c_l$.
Then, $\{c_j,c_{j+1},\dots,c_\ell,v\}$ induces a cycle $C'$ on at least five vertices in $G$.
As $c_j$ and $c_\ell$ are at distance at least three in $C$, $C'$ is shorter than $C$.
In particular, $C'$ contradicts either $G$ containing no copy of $C_\ell$, for $\ell \in \{5,\ldots,9\}$, or $C$ being a shortest hole in $G$.
Therefore, this case never occurs. 

Hence, $v$ has either (i) exactly two neighbors in $C$ and those are at distance two as there is no triangle in $G$, so \eqref{item:two-neigh} holds, or (ii) $C$ has an even number of vertices and $v$ is adjacent to every second vertex of $C$. 
It remains to show that the latter never occurs. 
Indeed, if it does, then without loss of generality $c_0\in N(v)$. 
But observe that since $C$ has at least ten vertices, the set $\{c_0,c_1,c_2,c_3,c_4,c_6,v\}$ induces a copy of $X_3$. This concludes the proof. \qedhere
\end{proof}
Given Proposition~\ref{prop:C_neighbors}, for every $i\in[m-1]$ we can set
$A_i = \big{\{}v \in V: N(v) \cap C = \{c_{i-1},c_{i+1}\}\!\big{\}}$ and $B_i = \big{\{}v \in V: N(v) \cap C = \{c_i\}\big{\}}$.
 Note that $A_0,B_0,\ldots,A_{m-1},B_{m-1}$ is a partition of $V$ and $c_i\in A_i$ .
Following our notation, for any integer $i$ by $A_i$ and $B_i$ we denote the sets $A_{i\bmod m}$ and $B_{i \bmod m}$, respectively. 
Furthermore, for every $i \leq j$ we set:
\begin{align*}
A_G[i,j] &= \left \{ 
 \begin{array}{ll}
    A_i \cup B_{i+1} \cup A_{i+2}\cup B_{i+3} \cup \ldots \cup A_{j-1} \cup B_{j} & \text{if $j-i$ is odd}, \\
    A_i \cup B_{i+1} \cup A_{i+2}\cup B_{i+3} \cup \ldots \cup B_{j-1} \cup A_{j} & \text{if $j-i$ is even}, 
 \end{array}
 \right.
 \\
 B_G[i,j] &= \left \{ 
 \begin{array}{ll}
    B_i \cup A_{i+1} \cup B_{i+2}\cup A_{i+3}\cup \ldots \cup B_{j-1} \cup A_{j} & \text{if $j-i$ is odd}, \\
    B_i \cup A_{i+1} \cup B_{i+2}\cup A_{i+3}\cup \ldots \cup A_{j-1} \cup B_{j} & \text{if $j-i$ is even}, 
 \end{array}
 \right.\\
\text{and \vspace{4pt}~ } V_G[i,j] &= A_G[i,j] \cup B_G[i,j].
\end{align*}

We write just $A[i,j]$, $B[i,j]$, and $V[i,j]$, respectively, instead of $A_G[i,j]$, $B_G[i,j]$, and $V_G[i,j]$, when there is no confusion.

We now characterize the neighborhoods of the vertices in sets $A_i$ and $B_i$, see also Figure~\ref{fig:AiBineigbours}.

\begin{figure}[htp!]
\centering
\begin{tikzpicture}[xscale=2,yscale=1.5]
\coordinate (c') at (-0.5,0.5) {};
\coordinate (c0) at (0,0) {};
\coordinate (c1) at (1,1) {};
\coordinate (c2) at (2,0) {};
\coordinate (c3) at (3,1) {};
\coordinate (c4) at (4,0) {};
\coordinate (u) at (2.25,0) {};
\coordinate (w) at (2,1) {};
\coordinate (c'') at (4.5,0.5) {};

\coordinate (lc0) at (-0.25,-0.05) {};
\coordinate (lc1) at (1.25,1.0) {};
\coordinate (lc2) at (1.85,-0.05) {};
\coordinate (lc3) at (3.25,1.0) {};
\coordinate (lc4) at (4.25,-0.05) {};
\coordinate (lu) at (2.38,-0.05) {};
\coordinate (lw) at (2.15,1.08) {};

\coordinate (lB0) at (0.0,-0.4) {};
\coordinate (lA0) at (0.0,1.4) {};
\coordinate (lB1) at (1.0,-0.4) {};
\coordinate (lA1) at (1.0,1.4) {};
\coordinate (lB2) at (2.0,-0.4) {};
\coordinate (lA2) at (2.0,1.4) {};
\coordinate (lB3) at (3.0,-0.4) {};
\coordinate (lA3) at (3.0,1.4) {};
\coordinate (lB4) at (4.0,-0.4) {};
\coordinate (lA4) at (4.0,1.4) {};

\coordinate (lA00) at (-0.3,0.2) {};
\coordinate (lA01) at (0,0.2) {};
\coordinate (lA02) at (0.3,0.2) {};

\coordinate (lB00) at (-0.3,0.8) {};
\coordinate (lB01) at (0,0.8) {};
\coordinate (lB02) at (0.3,0.8) {};

\coordinate (lA10) at (0.7,0.2) {};
\coordinate (lA11) at (1,0.2) {};
\coordinate (lA12) at (1.3,0.2) {};

\coordinate (lB10) at (0.7,0.8) {};
\coordinate (lB11) at (1,0.8) {};
\coordinate (lB12) at (1.3,0.8) {};

\coordinate (lA20) at (1.7,0.2) {};
\coordinate (lA21) at (2,0.2) {};
\coordinate (lA22) at (2.3,0.2) {};

\coordinate (lB20) at (1.7,0.8) {};
\coordinate (lB21) at (2,0.8) {};
\coordinate (lB22) at (2.3,0.8) {};

\coordinate (lA30) at (2.7,0.2) {};
\coordinate (lA31) at (3,0.2) {};
\coordinate (lA32) at (3.3,0.2) {};

\coordinate (lB30) at (2.7,0.8) {};
\coordinate (lB31) at (3,0.8) {};
\coordinate (lB32) at (3.3,0.8) {};

\coordinate (lA40) at (3.7,0.2) {};
\coordinate (lA41) at (4,0.2) {};
\coordinate (lA42) at (4.3,0.2) {};

\coordinate (lB40) at (3.7,0.8) {};
\coordinate (lB41) at (4,0.8) {};
\coordinate (lB42) at (4.3,0.8) {};

\path (lA00) edge[thick] (lB00);
\path (lA00) edge[thick] (lB01);
\path (lA00) edge[thick] (lB02);
\path (lA01) edge[thick] (lB00);
\path (lA01) edge[thick] (lB01);
\path (lA01) edge[thick] (lB02);
\path (lA02) edge[thick] (lB00);
\path (lA02) edge[thick] (lB01);
\path (lA02) edge[thick] (lB02);

\path (lA10) edge[thick] (lB10);
\path (lA10) edge[thick] (lB11);
\path (lA10) edge[thick] (lB12);
\path (lA11) edge[thick] (lB10);
\path (lA11) edge[thick] (lB11);
\path (lA11) edge[thick] (lB12);
\path (lA12) edge[thick] (lB10);
\path (lA12) edge[thick] (lB11);
\path (lA12) edge[thick] (lB12);

\path (lA20) edge[thick] (lB20);
\path (lA20) edge[thick] (lB21);
\path (lA20) edge[thick] (lB22);
\path (lA21) edge[thick] (lB20);
\path (lA21) edge[thick] (lB21);
\path (lA21) edge[thick] (lB22);
\path (lA22) edge[thick] (lB20);
\path (lA22) edge[thick] (lB21);
\path (lA22) edge[thick] (lB22);

\path (lA30) edge[thick] (lB30);
\path (lA30) edge[thick] (lB31);
\path (lA30) edge[thick] (lB32);
\path (lA31) edge[thick] (lB30);
\path (lA31) edge[thick] (lB31);
\path (lA31) edge[thick] (lB32);
\path (lA32) edge[thick] (lB30);
\path (lA32) edge[thick] (lB31);
\path (lA32) edge[thick] (lB32);

\path (lA40) edge[thick] (lB40);
\path (lA40) edge[thick] (lB41);
\path (lA40) edge[thick] (lB42);
\path (lA41) edge[thick] (lB40);
\path (lA41) edge[thick] (lB41);
\path (lA41) edge[thick] (lB42);
\path (lA42) edge[thick] (lB40);
\path (lA42) edge[thick] (lB41);
\path (lA42) edge[thick] (lB42);

\path (c') edge[very thick] (c0);
\path (c0) edge[very thick] (c1);
\path (c1) edge[very thick] (c2);
\path (c2) edge[very thick] (c3);
\path (c3) edge[very thick] (c4);
\path (c4) edge[very thick] (c'');

\draw[thick] (-0.45,-0.2)--(-0.45,0.2) -- (0.45,0.2) -- (0.45,-0.2)--cycle;
\draw[thick] (-0.45,0.8)--(-0.45,1.2) -- (0.45,1.2) -- (0.45,0.8)--cycle;

\draw[thick] (0.55,-0.2)--(0.55,0.2) -- (1.45,0.2) -- (1.45,-0.2)--cycle;
\draw[thick] (0.55,0.8)--(0.55,1.2) -- (1.45,1.2) -- (1.45,0.8)--cycle;

\draw[thick] (1.55,-0.2)--(1.55,0.2) -- (2.45,0.2) -- (2.45,-0.2)--cycle;
\draw[thick] (1.55,0.8)--(1.55,1.2) -- (2.45,1.2) -- (2.45,0.8)--cycle;

\draw[thick] (2.55,-0.2)--(2.55,0.2) -- (3.45,0.2) -- (3.45,-0.2)--cycle;
\draw[thick] (2.55,0.8)--(2.55,1.2) -- (3.45,1.2) -- (3.45,0.8)--cycle;

\draw[thick] (3.55,-0.2)--(3.55,0.2) -- (4.45,0.2) -- (4.45,-0.2)--cycle;
\draw[thick] (3.55,0.8)--(3.55,1.2) -- (4.45,1.2) -- (4.45,0.8)--cycle;

\begin{scope}[fill opacity=0.5]
\draw[rounded corners=7, fill=gray!20, thick] (0.25, 0.85)--(0.25,1.15) -- (3.75,1.15) -- (3.75,0.85)--(2.25,0)--cycle;
\draw[rounded corners=7, fill=gray!50, thick] (0.25, 0.85)--(0.25,1.15) -- (3.75,1.15) -- (3.75,0.85)--cycle;

\draw[rounded corners=7, fill=gray!20, thick] (0.25, 0.15)--(0.25,-0.15) -- (3.75,-0.15) -- (3.75,0.15)--(2,1.05)--cycle;
\draw[rounded corners=7, fill=gray!50, thick] (0.25, 0.15)--(0.25,-0.15) -- (3.75,-0.15) -- (3.75,0.15)--cycle;
\end{scope}

\draw[rounded corners=7, thick] (0.25, 0.85)--(0.25,1.15) -- (3.75,1.15) -- (3.75,0.85)--(2.25,0)--cycle;
\draw[rounded corners=7, thick] (0.25, 0.85)--(0.25,1.15) -- (3.75,1.15) -- (3.75,0.85)--cycle;

\tikzstyle{every node}=[circle,minimum size=5pt,inner sep=0pt,draw,fill]
\node at (c0) {};
\node at (c1) {};
\node at (c2) {};
\node at (c3) {};
\node at (c4) {};
\node at (u) {};
\node at (w) {};

\tikzstyle{every node}=[inner sep=2pt]
\begin{footnotesize}
\node at (lu) {$u$};
\node at (lw) {$w$};
\node at (lc0) {$c_{i-2}$};
\node at (lc1) {$c_{i-1}$};
\node at (lc2) {$c_{i}$};
\node at (lc3) {$c_{i+1}$};
\node at (lc4) {$c_{i+2}$};
\end{footnotesize}

\node at (lA0) {$B_{i-2}$};
\node at (lB0) {$A_{i-2}$};
\node at (lA1) {$A_{i-1}$};
\node at (lB1) {$B_{i-1}$};
\node at (lA2) {$B_{i}$};
\node at (lB2) {$A_{i}$};
\node at (lA3) {$A_{i+1}$};
\node at (lB3) {$B_{i+1}$};
\node at (lA4) {$B_{i+2}$};
\node at (lB4) {$A_{i+2}$};

\end{tikzpicture}

\caption{A possible neighborhood of $u$ in $A_i$ and $w$ in $B_i$.}
\label{fig:AiBineigbours}
\end{figure}
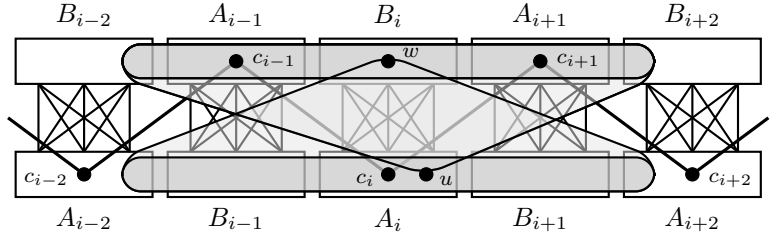
\begin{proposition}\label{prop:neighbourhoodd_containment}%
Let $i \in [m-1]$. Then:
\begin{enumerate}[(1)]
 \item \label{item:neighbourhood_containment_is} $A_i$ and $B_i$ are independent sets.
 \item \label{item:neighbourhood_containment_AiBi} For every $u \in A_i$ and every $w \in B_i$ we have $uw \in E$.
 \item \label{item:neighbourhood_containment_Ai} For every $u \in A_i$ we have $B_i \subseteq N(u) \subseteq B[i-2,i+2]$.
 \item \label{item:neighbourhood_containment_Bi} For every $w \in B_i$ we have $A_i \subseteq N(w) \subseteq A[i-2,i+2]$.
\end{enumerate}
\end{proposition}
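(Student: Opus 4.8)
The plan is to establish the four items in order, bootstrapping the later ones on the earlier, and to lean throughout on one recurring device: since every $u\in A_i$ has the same two cycle-neighbours $c_{i-1},c_{i+1}$ as $c_i$, the set $C'=(C\setminus\{c_i\})\cup\{u\}$ is again an induced cycle of length $m$, hence (as $C$ is a shortest hole) itself a shortest hole, to which Propositions~\ref{prop:C_dominating_set} and~\ref{prop:C_neighbors} apply verbatim. Item~\eqref{item:neighbourhood_containment_is} is immediate: two adjacent vertices of $A_i$ would form a triangle with $c_{i-1}$, and two adjacent vertices of $B_i$ a triangle with $c_i$, both impossible as $G$ is $K_3$-free. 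For item~\eqref{item:neighbourhood_containment_AiBi} I fix $u\in A_i$, $w\in B_i$ and form the swapped hole $C'$. The vertex $w$ lies outside $C'$, and since $N(w)\cap C=\{c_i\}$ with $c_i\notin C'$, its only possible neighbour on $C'$ is $u$; as $C'$ is a hole it dominates $G$ by Proposition~\ref{prop:C_dominating_set}, so $w$ must have a neighbour on $C'$, forcing $uw\in E$. This already yields $B_i\subseteq N(u)$ in~\eqref{item:neighbourhood_containment_Ai} and $A_i\subseteq N(w)$ in~\eqref{item:neighbourhood_containment_Bi}.

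The two upper bounds are the real content. For~\eqref{item:neighbourhood_containment_Ai} I reuse $C'$: for any neighbour $x$ of $u$ we have $u\in N(x)\cap C'$, so Proposition~\ref{prop:C_neighbors} applied to the shortest hole $C'$ forces $N(x)\cap C'\subseteq\{u,c_{i-2},c_{i+2}\}$, the only vertices of $C'$ within distance two of $u$. Hence $x$ is non-adjacent to every $c_j$ with $j\notin\{i-2,i,i+2\}$. Feeding this back into the classification of $x$ with respect to $C$ (Proposition~\ref{prop:C_neighbors} again): if $N(x)\cap C$ is a single vertex it lies in $\{c_{i-2},c_i,c_{i+2}\}$, giving $x\in B_{i-2}\cup B_i\cup B_{i+2}$; if it is a pair at distance two inside $\{c_{i-2},c_i,c_{i+2}\}$ it is $\{c_{i-2},c_i\}$ or $\{c_i,c_{i+2}\}$, giving $x\in A_{i-1}\cup A_{i+1}$ (the pair $\{c_{i-2},c_{i+2}\}$ is excluded as it is at distance four). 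Altogether $x\in A_{i-1}\cup A_{i+1}\cup B_{i-2}\cup B_i\cup B_{i+2}=B[i-2,i+2]$.

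For the upper bound in~\eqref{item:neighbourhood_containment_Bi} I take a neighbour $y$ of $w\in B_i$ and split on its type from Proposition~\ref{prop:C_neighbors}. If $y\in A_k$, then $w\in N(y)$, and item~\eqref{item:neighbourhood_containment_Ai} gives $N(y)\subseteq B[k-2,k+2]$; since $w\in B_i$ is a $B$-class vertex it must land in one of $B_{k-2},B_k,B_{k+2}$, forcing $i\in\{k-2,k,k+2\}$ and hence $y\in A_{i-2}\cup A_i\cup A_{i+2}$. If $y\in B_k$, I argue by minimality of $C$: writing $\delta$ for the cyclic distance between $i$ and $k$ and taking the shorter $C$-arc $c_k=p_0,\dots,p_\delta=c_i$, the sequence $w,y,p_0,\dots,p_\delta$ closes into a cycle on $\delta+3$ vertices whose only edges are $wy$, $yc_k$, the arc edges, and $c_iw$. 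It is induced because $N(w)\cap C=\{c_i\}$ and $N(y)\cap C=\{c_k\}$ kill all chords from $w,y$ and $C$ is itself induced. If $\delta\geq 2$ this is an induced $C_\ell$ with $\ell\in[5,9]$, or, when $\delta\geq 7$, a hole of length $\delta+3\leq\lfloor m/2\rfloor+3<m$ shorter than $C$ — a contradiction either way; and $\delta=0$ makes $w,y,c_i$ a triangle. Hence $\delta=1$ and $y\in B_{i-1}\cup B_{i+1}$, so $y\in A[i-2,i+2]$.

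The main obstacle is exactly these two containment directions, which state that neither $u\in A_i$ nor $w\in B_i$ has long-range neighbours. The swap trick dispatches the $A_i$ case cleanly, but a $B_i$-vertex cannot be inserted into the hole, so its $B$-type neighbours have to be excluded directly; the delicate point there is checking that the cycle $w,y,p_0,\dots,p_\delta$ is genuinely \emph{induced}, which is precisely where one uses that $w$ and $y$ each have a single neighbour on $C$ and that the chosen arc avoids the second cycle-neighbour of $y$.
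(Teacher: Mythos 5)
Your argument is correct, and its core device differs from the one in the paper. For item~\eqref{item:neighbourhood_containment_AiBi} the paper exhibits the seven-vertex set $\{c_{i-2},u,c_i,c_{i+2},c_{i-1},w,c_{i+1}\}$ as an induced $X_2$ when $uw\notin E$, and for the containment $N(u)\subseteq B[i-2,i+2]$ in~\eqref{item:neighbourhood_containment_Ai} it first shows $G[V[i-2,i+2]]$ is bipartite (any offending edge extends to a short odd cycle) and then kills long-range neighbours $v\in A_j\cup B_j$, $j\notin[i-2,i+2]$, by closing $u,v$ with the two arcs of $C$ into a cycle of length between $6$ and $m-2$. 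You instead substitute $u$ for $c_i$ to obtain a second shortest hole $C'$ and transport Propositions~\ref{prop:C_dominating_set} and~\ref{prop:C_neighbors} to it: domination by $C'$ gives~\eqref{item:neighbourhood_containment_AiBi} at once, and the classification of $N(x)\cap C'$ pins every neighbour of $u$ into $B[i-2,i+2]$ without any forbidden-subgraph hunting. Your case $y\in B_k$ in~\eqref{item:neighbourhood_containment_Bi} is the one place where you fall back on the paper's arc-closing technique, and you correctly check that the resulting $(\delta+3)$-cycle is induced and either lies in $[5,9]$ or is a hole shorter than $m$. Both routes are sound; the swap trick is cleaner and reuses Proposition~\ref{prop:C_neighbors} as a black box (one only needs to observe, as you do, that its proof uses nothing about $C$ beyond being a shortest hole of length $\geq 10$), whereas the paper's route yields as a by-product the bipartiteness of $G[V[i-2,i+2]]$ with explicit bipartition classes, a fact it leans on again in Lemma~\ref{lem:slices_are_bipartite_permutation} (though that also follows from items~\eqref{item:neighbourhood_containment_Ai} and~\eqref{item:neighbourhood_containment_Bi} once proved).
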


\begin{proof}
Statement \eqref{item:neighbourhood_containment_is} follows trivially from the fact that $G$ contains no triangle.
To show statement \eqref{item:neighbourhood_containment_AiBi}, assume for a contrary that $uw \notin E$ for some $u \in A_i$ and some $w \in B_i$.
Since $u \in A_i$, we have $N(u) \cap C =\{c_{i-1}, c_{i+1}\}$, and since $w \in B_i$ we have $N(w) \cap C =\{c_i\}$.
Hence, the set $\{c_{i-2},u,c_i,c_{i+2},c_{i-1},w,c_{i+1}\}$ induces an $X_2$ in $G$, 
which cannot be the case.

To prove statements \eqref{item:neighbourhood_containment_Ai} and \eqref{item:neighbourhood_containment_Bi}, consider a graph
$G$ induced by the set $U \cup W$, where
$$U = A[i-2,i+2] \quad \text{and} \quad W = B[i-2,i+2].$$
Since any edge with two endpoints in $U$ (or two endpoints in $W$) could be
extended by some vertices from $\{c_{i-2},c_{i-1},c_{i},c_{i+1},c_{i+2}\}$ to
an odd cycle of size~$\leq 7$ in $G$, the graph $G[U \cup W]$ is bipartite with bipartition classes $U$ and $W$.

To see that \eqref{item:neighbourhood_containment_Ai} holds, first note that
$B_i \subseteq N(u)$ by statement \eqref{item:neighbourhood_containment_AiBi}.
Therefore, suppose that $u$ has a neighbor $v$ in the set $V \setminus (U \cup W)$.

Consider the case when $v \in A_j$ for some $j \notin [i-2,i+2]$.
Since $N(u) \cap C = \{c_{i-1},c_{i+1}\}$ and $N(v) \cap C = \{c_{j-1},c_{j+1}\}$, $u,v$ and the vertices between $c_{j+1}$ and $c_{i-1}$ in $C$ as well as $u,v$ and the vertices between $c_{i+1}$ and $c_{j-1}$ in $C$ 
induce cycles in $G$ of size $\leq m-2$.
Since $|C| \geq 10$, at least one from these cycles has size $\geq 6$, and as such it can not occur in $G$.

So suppose $v \in B_j$ for some $j \notin [i-2,i+2]$.
Since $N(v) \cap C = \{c_j\}$, $u,v$ and the vertices between $c_{i+1}$ and $c_j$ in $C$ as well as $u,v$ and the vertices between $c_{j}$ and $c_{i-1}$ in $C$ induce holes in $G$ of size $\leq m-2$, 
and as such they can not occur in $G$.
So, $N(u) \subseteq W$, which completes the proof of statement \eqref{item:neighbourhood_containment_Ai}.

Statement \eqref{item:neighbourhood_containment_Bi} is proved by analogous argument.
\end{proof}
 
Proposition~\ref{prop:neighbourhoodd_containment} asserts that all the neighbors of the vertices from $A_i$ and from $B_i$ are contained in the set
$B[i-2,i+2]$ and $A[i-2,i+2]$, respectively. 
The next proposition describes the relations that hold between the neighborhoods of the vertices from
$B[i-2,i+2]$ restricted to the set $A_i$ and between the neighborhoods of the vertices from
$A[i-2,i+2]$ restricted to the set $B_i$.

\begin{proposition}
\label{prop:neighbourhoods_in_AiBi}
Let $i \in [m-1]$. For $(i \pm 2,i \pm 1) \in \{(i-2,i-1),(i+2,i+1)\}$, the following hold:
\begin{enumerate}[(1)]
 \item \label{item:neighbourhoods_in_Ai} For every $w,w' \in B_{i \pm 2} \cup A_{i \pm 1}$ the sets $N(w) \cap A_i$ and $N(w') \cap A_i$ are comparable.
 
 Moreover, if $w \in B_{i \pm 2}$ and $w' \in A_{i \pm 1}$, then $N(w) \cap A_i \subseteq N(w') \cap A_i$.
 \item \label{item:neighbourhoods_in_Bi} For every $u,u' \in A_{i \pm 2} \cup B_{i \pm 1}$ the sets $N(u) \cap B_i$ and $N(u') \cap B_i$ are comparable.
 
 Moreover, if $u \in A_{i \pm 2}$ and $u' \in B_{i \pm 1}$, then $N(u) \cap B_i \subseteq N(u') \cap B_i$.
\end{enumerate}
\end{proposition}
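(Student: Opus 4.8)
The plan is to prove both statements by contradiction: assuming that two of the relevant traces of neighborhoods are incomparable (or, for the ``Moreover'' parts, nested the wrong way) will produce a short forbidden induced subgraph. Reversing the cyclic labelling of $C$ about $c_i$, i.e.\ passing to the equally valid labelling $c'_k:=c_{2i-k}$, fixes $c_i$ and turns each $A_j,B_j$ into $A_{2i-j},B_{2i-j}$, so it swaps the cases $(i-2,i-1)$ and $(i+2,i+1)$; hence it suffices to treat $(i\pm2,i\pm1)=(i-2,i-1)$. Throughout I freely use Proposition~\ref{prop:neighbourhoodd_containment}: the sets $A_j,B_j$ are independent, $A_{i-1}\subseteq N(w)$ for $w\in B_{i-1}$ and $A_i\subseteq N(w)$ for $w\in B_i$, and the band-confinement of neighborhoods (which in particular yields $ww'\notin E$ for all $w,w'\in B_{i-2}\cup A_{i-1}$ and $uu'\notin E$ for all $u,u'\in A_{i-2}\cup B_{i-1}$). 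All vertices appearing below are distinct because $m\ge 10$.

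For statement (1) the key observation is that every $w\in B_{i-2}\cup A_{i-1}$ is adjacent to $c_{i-2}$ but not to $c_{i+1}$, while every $u\in A_i$ is adjacent to $c_{i+1}$ but to neither $c_{i-2}$ nor $c_i$. If $N(w)\cap A_i$ and $N(w')\cap A_i$ were incomparable, choose $u\in(N(w)\cap A_i)\setminus N(w')$ and $u'\in(N(w')\cap A_i)\setminus N(w)$; then $\{w,u,c_{i+1},u',w',c_{i-2}\}$ induces a hexagon along $w\,u\,c_{i+1}\,u'\,w'\,c_{i-2}\,w$. The only candidate chords are excluded by the choice of $u,u'$ (the pairs $wu',w'u$), by independence ($uu',ww'$), by the $C$-neighborhoods just listed ($uc_{i-2},u'c_{i-2},wc_{i+1},w'c_{i+1}$), and by the fact that $c_{i+1},c_{i-2}$ lie at distance $3$ on $C$. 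This induced $C_6$ is forbidden, so the neighborhoods are comparable. For the directional ``Moreover'' with $w\in B_{i-2}$, $w'\in A_{i-1}$, assuming some $u\in(N(w)\cap A_i)\setminus N(w')$ the same computation makes $\{w',c_i,c_{i+1},u,w,c_{i-2}\}$ an induced $C_6$, giving $N(w)\cap A_i\subseteq N(w')\cap A_i$.

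Statement (2) is where the real difficulty lies, and it is genuinely \emph{not} a mirror image of (1): the target $B_i$ has the single cycle-neighbour $c_i$, whereas in (1) the target $A_i$ had two cycle-neighbours, so that two members of $A_i$ could be joined through $c_{i+1}$, which is far from the opposite connector $c_{i-2}$. This flexibility survives only when the two source vertices both lie in $A_{i-2}$, since $A_{i-2}$-vertices share the cycle-neighbour $c_{i-3}$, at distance $3$ from $c_i$: a crossing $w\in(N(u)\cap B_i)\setminus N(u')$, $w'\in(N(u')\cap B_i)\setminus N(u)$ then yields the induced hexagon $u\,w\,c_i\,w'\,u'\,c_{i-3}\,u$, again a forbidden $C_6$. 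The obstruction is precisely that a source in $B_{i-1}$ has its only cycle-anchor $c_{i-1}$ adjacent to $c_i$, so the hexagon acquires the chord $c_{i-1}c_i$ and degenerates; no $C_6$ is available, and this is the main point where a new idea is needed.

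To overcome it I would invoke two further forbidden graphs. For the directional ``Moreover'', take $u\in A_{i-2}$, $u'\in B_{i-1}$ and suppose some $w\in B_i$ has $uw\in E$ but $u'w\notin E$; then $\{u,u',w,c_{i-3},c_{i-1},c_i,c_{i+1}\}$ induces $X_2$, realized as the $4$-cycle $u\,w\,c_i\,c_{i-1}\,u$ together with the three pendants $c_{i-3}$ on $u$, $u'$ on $c_{i-1}$, and $c_{i+1}$ on $c_i$; the required non-adjacencies all follow from the $C$-neighborhoods and from $uu'\notin E$. This contradiction gives $N(u)\cap B_i\subseteq N(u')\cap B_i$ and thereby disposes of comparability for every mixed pair. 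Finally, for two sources $u,u'\in B_{i-1}$, a crossing $w\in(N(u)\cap B_i)\setminus N(u')$, $w'\in(N(u')\cap B_i)\setminus N(u)$ produces an induced copy of the spider $T_2$ centred at $c_{i-1}$, with legs $c_{i-1}\,u\,w$, $c_{i-1}\,u'\,w'$, and $c_{i-1}\,c_{i-2}\,c_{i-3}$ running into the cycle: here $u,u'\sim c_{i-1}$, while $w,w'\sim c_i\notin$ this set, so $c_{i-1}$ has degree exactly $3$ and the tree is induced and forbidden. The genuine work, and the only departure from the clean hexagon of (1), is exactly this $B_{i-1}$ case, whose resolution via $T_2$ (and $X_2$ for the direction) is the crux of the whole proposition.
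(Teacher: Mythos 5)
Your proof is correct: I checked each of the claimed induced subgraphs (the two hexagons in part (1), and the $C_6$, $X_2$ and $T_2$ in part (2)), and all the required adjacencies, non-adjacencies and distinctness of vertices do follow from Propositions~\ref{prop:C_neighbors} and~\ref{prop:neighbourhoodd_containment}, triangle-freeness, and $m\ge 10$; the relabelling $c'_k:=c_{2i-k}$ correctly reduces $(i+2,i+1)$ to $(i-2,i-1)$. The strategy is the same as the paper's (derive a small forbidden induced subgraph from a hypothetical violation), but your witnesses differ throughout. For (1) the paper exhibits a $T_2$ centred at $c_{i-2}$ on $\{c_{i-4},c_{i-3},c_{i-2},w,w',u,u'\}$ rather than your $C_6$ through $c_{i+1}$ and $c_{i-2}$, and obtains the ``Moreover'' part for free from comparability plus the witness $c_i\in N(w')\setminus N(w)$. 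More significantly, for (2) the paper needs no case split at all: since every vertex of $A_{i-2}\cup B_{i-1}$ is adjacent to $c_{i-1}$ and to neither $c_i$ nor $c_{i+1}$, a single set $\{c_{i-1},w,w',c_{i+1},u,c_i,u'\}$ induces $X_3$ uniformly, whichever parts $u,u'$ come from (and the ``Moreover'' part is another $X_3$). So your diagnosis that the $B_{i-1}$ sources are ``the crux of the whole proposition'' and demand a genuinely new idea is an artifact of insisting on hexagons as the obstruction; the asymmetry between (1) and (2) disappears once one reaches for the right Gallai obstruction. What your route buys is that most of the individual steps use the most elementary forbidden structures (short holes), at the cost of a three-way case analysis; the paper's is shorter but leans on the more complex graphs $X_2$, $X_3$, $T_2$ everywhere.
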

\begin{proof}
To prove \eqref{item:neighbourhoods_in_Ai}, we consider the case $(i \pm 2,i \pm 1) =(i-2,i-1)$, as the other one follows by symmetry. Suppose that $w,w' \in B_{i-2} \cup A_{i-1}$ are such that neither 
$N(w) \cap A_i \subseteq N(w') \cap A_i$ nor $N(w') \cap A_i \subseteq N(w) \cap A_i$ holds.
It means that there are $u,u' \in A_i$
such that $wu \in E$, $w'u' \in E$, $wu' \notin E$, and $w'u \notin E$.
Since $w,w' \in B_{i-2} \cup A_{i-1}$, we have $c_{i-2}w,c_{i-2}w' \in E$ and 
$c_{i-4}w, c_{i-3}w, c_{i-4}w', c_{i-3}w' \notin E$.
Furthermore, $ww'\notin E$ and $uu'\notin E$ as $G$ contains no triangle.
Consequently, the set $\{c_{i-3},w,w', c_{i-4},c_{i-2},u,u'\}$ induces a copy of $T_2$ in $G$, which cannot be the case. 
Moreover, if $w \in B_{i-2}$, $w' \in A_{i-1}$, then since $c_i \in (N(w') \cap A_i) \setminus (N(w) \cap A_i)$, the latter statement holds.

To show \eqref{item:neighbourhoods_in_Bi}, we again only consider the case $(i \pm 2,i \pm 1) =(i-2,i-1)$. Suppose that $u,u' \in A_{i-2} \cup B_{i-1}$ are such that neither $N(u') \cap B_i \subseteq N(u) \cap B_i$
nor $N(u) \cap B_i \subseteq N(u') \cap B_i$ holds. 
It means that there are $w,w' \in B_i$ such that $uw,u'w' \in E$ and $u'w, uw' \notin E$.
Since $u,u' \in A_{i-2} \cup B_{i-1}$, we have $uc_{i-1},u'c_{i-1} \in E$ and $uc_{i+1},u'c_{i+1} \notin E$. 
Furthermore, $uu'\notin E$ and $ww'\notin E$ as $G$ contains no triangle.
Hence, the set $\{c_{i-1}, w, w', c_{i+1},u,c_i,u'\}$ induces a copy of $X_3$ in $G$, which cannot be the case.

To see the second part of the statement, assume that
$N(u) \cap B_i \nsubseteq N(u') \cap B_i$ for some $u \in A_{i-2}$, $u' \in B_{i-1}$.
That is, there is $w \in B_i$ such that $uw \in E$ and $u'w \notin E$.
In particular, it means that $u\neq c_{i-2}$.
Note that $uc_{i-1},u'c_{i-1} \in E$.
Consequently, the set $\{c_{i-3},c_{i-2},c_{i-1},c_i,u,u',w\}$ induces a copy of $X_3$ in $G$, 
which is a contradiction.
\end{proof}

Proposition~\ref{prop:neighbourhoods_in_AiBi} allows us to order vertices of $A_i$ based on two properties. We now define relation $<_{A_i}$ which combines them and we show that $<_{A_i}$ is a partial order (see Figure~\ref{fig:neigbourhoodsinAi} for an illustration).
We define for every $u,u' \in A_i$:
\[
u <_{A_i} u' \text{ iff }
\begin{array}{l}
\text{there is $w \in B_{i-2} \cup A_{i-1}$ such that $u \in N(w)$ and $u' \notin N(w)$, or} \\
\text{there is $w \in A_{i+1} \cup B_{i+2}$ such that $u' \in N(w)$ and $u \notin N(w)$,}
 \end{array}
\]
Similarly, we define a relation ${<_{B_i}}$ for every $w,w' \in B_i$:
\[
w <_{B_i} w' \text{ iff }
\begin{array}{l}
\text{there is $u \in A_{i-2} \cup B_{i-1}$ such that $w \in N(u)$ and $w' \notin N(u)$, or} \\
\text{there is $u \in B_{i+1} \cup A_{i+2}$ such that $w' \in N(u)$ and $w \notin N(u)$.}
 \end{array}
\]
\begin{figure}[htp!]
\centering
\begin{tikzpicture}[xscale=1,yscale=1.5]
\coordinate (w1) at (0,1) {};
\coordinate (lB_2) at (0,1.4) {};
\coordinate (w2) at (1,1) {};
\coordinate (lA_1) at (1,1.4) {};
\coordinate (u1) at (2,0) {};
\coordinate (u2) at (3,0) {};
\coordinate (u3) at (4,0) {};
\coordinate (u4) at (5,0) {};
\coordinate (u5) at (6,0) {};
\coordinate (u6) at (7,0) {};
\coordinate (u7) at (8,0) {};
\coordinate (w3) at (9,1) {};
\coordinate (lA1) at (9.5,1.4) {};
\coordinate (w4) at (10,1) {};
\coordinate (w5) at (11,1) {};
\coordinate (lB2) at (11,1.4) {};

\coordinate (lu1) at (2.3,0) {};
\coordinate (lu2) at (3.3,0) {};
\coordinate (lu3) at (4.3,0) {};
\coordinate (lu4) at (5.3,0) {};
\coordinate (lu5) at (6.3,0) {};
\coordinate (lu6) at (7.3,0) {};
\coordinate (lu7) at (8.3,0) {};

\coordinate (lAi) at (5,-0.6) {};

\begin{scope}[fill opacity=0.5]

\draw[rounded corners=7, fill=gray!20, thick] (6.5,0.2) --(6.5,-0.2) -- (1,-0.2) -- (0.95,1.05) --cycle;
\draw[rounded corners=7, fill=gray!40, thick] (6.5,0.2) --(6.5,-0.2) -- (1,-0.2) -- (1,0.2) --cycle;

\draw[rounded corners=7, fill=gray!20, thick] (2.5,0.15) --(2.5,-0.15) -- (0,-0.15) -- (-0.05,1.05) --cycle;
\draw[rounded corners=7, fill=gray!40, thick] (2.5,0.15) --(2.5,-0.15) -- (0,-0.15) -- (0,0.15) -- cycle;

\draw[rounded corners=7, fill=gray!20, thick] (4.5,0.25) --(4.5,-0.25) -- (9,-0.25) -- (9.05,1.05) --cycle;
\draw[rounded corners=7, fill=gray!40, thick] (4.5,0.25) --(4.5,-0.25) -- (9,-0.25) -- (9,0.25) --cycle;

\draw[rounded corners=7, fill=gray!20, thick] (5.5,0.2) --(5.5,-0.2) -- (10,-0.2) -- (10.05,1.05) --cycle;
\draw[rounded corners=7, fill=gray!40, thick] (5.5,0.2) --(5.5,-0.2) -- (10,-0.2) -- (10,0.2) --cycle;

\draw[rounded corners=7, fill=gray!20, thick] (7.5,0.15) --(7.5,-0.15) -- (11,-0.15) -- (11.05,1.05) --cycle;
\draw[rounded corners=7, fill=gray!40, thick] (7.5,0.15) --(7.5,-0.15) -- (11,-0.15) -- (11,0.1) --cycle;

\end{scope}

\draw[very thick] (1.5,-0.3) --(1.5,0.3) -- (8.5,0.3) -- (8.5,-0.3) --cycle;
\draw[thick] (-0.4,0.8) --(-0.4,1.2) -- (0.4,1.2) -- (0.4,0.8) --cycle;
\draw[thick] (0.6,0.8) --(0.6,1.2) -- (1.4,1.2) -- (1.4,0.8) --cycle;

\draw[thick] (8.6,0.8) --(8.6,1.2) -- (10.4,1.2) -- (10.4,0.8) --cycle;
\draw[thick] (10.6,0.8) --(10.6,1.2) -- (11.4,1.2) -- (11.4,0.8) --cycle;

\tikzstyle{every node}=[circle,minimum size=5pt,inner sep=0pt,draw,fill]
\node at (w1) {};
\node at (w2) {};
\node at (w3) {};
\node at (w4) {};
\node at (w5) {};
\node at (u1) {};
\node at (u2) {};
\node at (u3) {};
\node at (u4) {};
\node at (u5) {};
\node at (u6) {};
\node at (u7) {};

\tikzstyle{every node}=[inner sep=2pt]
\begin{footnotesize}
\node at (lu1) {$u_1$};
\node at (lu2) {$u_2$};
\node at (lu3) {$u_3$};
\node at (lu4) {$u_4$};
\node at (lu5) {$u_5$};
\node at (lu6) {$u_6$};
\node at (lu7) {$u_7$};
\end{footnotesize}
\node at (lAi) {$A_i$};
\node at (lB_2) {$B_{i-2}$};
\node at (lA_1) {$A_{i-1}$};
\node at (lA1) {$A_{i+1}$};
\node at (lB2) {$B_{i+2}$};

\end{tikzpicture}

\caption{\label{fig:neigbourhoodsinAi} The neighborhoods of the vertices from $B_{i-2} \cup A_{i-1} \cup A_{i+1} \cup B_{i+2}$ restricted to $A_i$. We have $u_1 <_{A_i} \{u_2,u_3\} <_{A_i} u_4 <_{A_i} u_5 <_{A_i} u_6 <_{A_i} u_7$.}

\end{figure}
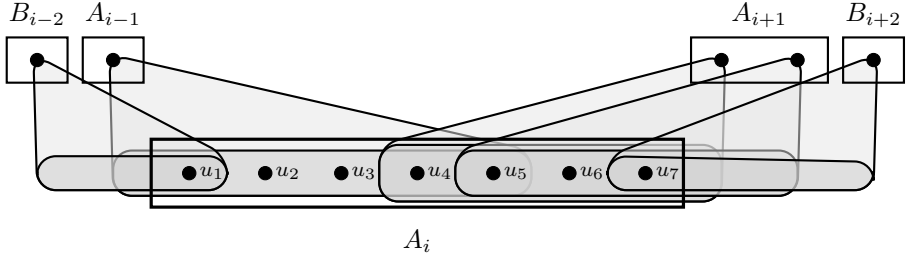
\begin{proposition} The following statements hold for every $i \in [m-1]$:
\label{prop:weak_orders}
\begin{enumerate}
 \item \label{item:weak_order_Ai} $(A_i,{<_{A_i}})$ is a strict partial order.
 Moreover, $u,u' \in A_i$ are incomparable in $(A_i,{<_{A_i}})$ if and only if $N(u) = N(u')$. 
 \item \label{item:weak_order_Bi} $(B_i,{<_{B_i}})$ is a strict partial order.
 Moreover, $w,w' \in B_i$ are incomparable in $(B_i,{<_{B_i}})$ if and only if $N(w) = N(w')$. 
\end{enumerate}
\end{proposition}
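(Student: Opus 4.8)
The plan is to establish \eqref{item:weak_order_Ai} for $(A_i,{<_{A_i}})$; part \eqref{item:weak_order_Bi} then follows by the symmetry that swaps the roles of $A$ and $B$ and reflects the indices, which simply exchanges the two defining clauses. Throughout I abbreviate $L = B_{i-2}\cup A_{i-1}$ and $R = A_{i+1}\cup B_{i+2}$, and for $u\in A_i$ I set $N_L(u)=N(u)\cap L$ and $N_R(u)=N(u)\cap R$. By Proposition~\ref{prop:neighbourhoodd_containment}\eqref{item:neighbourhood_containment_Ai} we have $B_i\subseteq N(u)\subseteq B[i-2,i+2]$, so $N(u)$ is the disjoint union $N_L(u)\cup B_i\cup N_R(u)$; in particular, for $u,u'\in A_i$ the equality $N(u)=N(u')$ is equivalent to $N_L(u)=N_L(u')$ and $N_R(u)=N_R(u')$ holding together. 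With this notation the definition reads: $u<_{A_i}u'$ iff $N_L(u)\not\subseteq N_L(u')$ or $N_R(u')\not\subseteq N_R(u)$.

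First I would record two \emph{totality} facts: for all $u,u'\in A_i$ the sets $N_L(u),N_L(u')$ are comparable, and likewise $N_R(u),N_R(u')$ are comparable. Indeed, if $N_L(u)$ and $N_L(u')$ were incomparable there would be $w\in N_L(u)\setminus N_L(u')$ and $w'\in N_L(u')\setminus N_L(u)$, both in $L$; then $N(w)\cap A_i$ contains $u$ but not $u'$, while $N(w')\cap A_i$ contains $u'$ but not $u$, contradicting Proposition~\ref{prop:neighbourhoods_in_AiBi}\eqref{item:neighbourhoods_in_Ai}. The same argument on the $(i+2,i+1)$ instance handles $R$. Consequently $N_L(u)\not\subseteq N_L(u')$ is equivalent to $N_L(u)\supsetneq N_L(u')$, and symmetrically on the right, so
\[
u<_{A_i}u' \quad\Longleftrightarrow\quad N_L(u)\supsetneq N_L(u') \ \text{ or }\ N_R(u)\subsetneq N_R(u').
\]

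The heart of the proof, and the step I expect to be the main obstacle, is a \emph{no-crossing} property: there are no $u,u'\in A_i$ admitting simultaneously a vertex $a\in L$ with $a\in N(u)\setminus N(u')$ and a vertex $b\in R$ with $b\in N(u)\setminus N(u')$; equivalently, one cannot have $N_L(u)\supsetneq N_L(u')$ and $N_R(u)\supsetneq N_R(u')$ at once. I would prove this by rerouting the hole. Suppose such $a,u,b,u'$ exist. Since $a\in L$ we have $a\sim c_{i-2}$ and $N(a)\cap C\subseteq\{c_{i-2},c_i\}$; since $b\in R$ we have $b\sim c_{i+2}$ and $N(b)\cap C\subseteq\{c_i,c_{i+2}\}$; moreover $a\sim u\sim b$ and $a\not\sim b$ (else $\{a,u,b\}$ is a triangle). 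Replacing the arc $c_{i-2},c_{i-1},c_i,c_{i+1},c_{i+2}$ of $C$ by $c_{i-2},a,u,b,c_{i+2}$ gives
\[
C' = (c_{i-2},a,u,b,c_{i+2},c_{i+3},\dots,c_{i-3}).
\]
Using $N(u)\cap C=\{c_{i-1},c_{i+1}\}$ together with the containments above, one checks that $C'$ has no chords, so it is an induced cycle of the same length $m\ge 10$ as $C$, hence a hole. But $u'\notin C'$, and $u'$ has no neighbour on $C'$: it is nonadjacent to $a,b$ by assumption, to $u$ because $A_i$ is independent, and to every remaining $c_j\in C'$ because $N(u')\cap C=\{c_{i-1},c_{i+1}\}$ while neither $c_{i-1}$ nor $c_{i+1}$ lies on $C'$. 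This contradicts Proposition~\ref{prop:C_dominating_set}, proving the claim. Note that this argument is uniform and avoids any case analysis on the positions of $a,b$ or on whether $B_i$ is empty.

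Finally I would assemble the pieces. Define $u\preceq u'$ iff $N_L(u)\supseteq N_L(u')$ and $N_R(u)\subseteq N_R(u')$. The totality facts and the no-crossing property show $\preceq$ is a total preorder: if the $L$-neighbourhoods are equal the comparison is decided on the right, whereas if $N_L(u)\supsetneq N_L(u')$ then no-crossing forbids $N_R(u)\supsetneq N_R(u')$ and forces $N_R(u)\subseteq N_R(u')$, giving $u\preceq u'$; reflexivity and transitivity are inherited from $\subseteq$. A short computation with the displayed equivalence shows that $<_{A_i}$ is precisely the strict part of $\preceq$ (that is, $u<_{A_i}u'$ iff $u\preceq u'$ and not $u'\preceq u$), and the strict part of a preorder is always a strict partial order, which yields \eqref{item:weak_order_Ai}. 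Incomparability in $<_{A_i}$ means $u\preceq u'$ and $u'\preceq u$, which unwinds to $N_L(u)=N_L(u')$ and $N_R(u)=N_R(u')$, i.e.\ $N(u)=N(u')$, as required. All the genuine content sits in the no-crossing property; the remainder is bookkeeping with the two chain orders.
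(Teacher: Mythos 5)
Your treatment of part~\eqref{item:weak_order_Ai} is correct, and it takes a genuinely different route from the paper. The paper also reduces antisymmetry to the ``crossing'' configuration (a vertex of $B_{i-2}\cup A_{i-1}$ and a vertex of $A_{i+1}\cup B_{i+2}$ both adjacent to $u$ and not to $u'$), but refutes it locally by exhibiting an induced $X_2$ on $\{w,w',c_{i+1},c_{i+3},u,u',c_{i+2}\}$, and then proves transitivity by a second, separate $X_2$-hunt. You instead refute the crossing globally: reroute the shortest hole through $c_{i-2},a,u,b,c_{i+2}$ and observe that $u'$ then violates Proposition~\ref{prop:C_dominating_set}. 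I checked the rerouted cycle: it is induced (using $N(a)\cap C\subseteq\{c_{i-2},c_i\}$, $N(b)\cap C\subseteq\{c_i,c_{i+2}\}$, $N(u)\cap C=\{c_{i-1},c_{i+1}\}$, and the absence of triangles), it has length $m\geq 10$, and $u'$ indeed has no neighbour on it; the argument survives the degenerate choices $u=c_i$, $a=c_{i-1}$, $b=c_{i+1}$. Your totality step is a correct dualization of Proposition~\ref{prop:neighbourhoods_in_AiBi}.\eqref{item:neighbourhoods_in_Ai}, and packaging $<_{A_i}$ as the strict part of the total preorder $\preceq$ gives you transitivity and the incomparability characterization for free, which is cleaner than the paper's second case analysis. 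What your approach buys is uniformity and no reliance on the precise shape of $X_2$/$X_3$; what the paper's buys is locality (no appeal to the global domination property).

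The gap is in your first sentence: part~\eqref{item:weak_order_Bi} does \emph{not} follow ``by the symmetry that swaps $A$ and $B$.'' There is no such symmetry of the structure: every vertex of $A_i$ has exactly two neighbours $c_{i-1},c_{i+1}$ on $C$ and $c_i\in A_i$, whereas every vertex of $B_i$ has the single neighbour $c_i$ on $C$ and $B_i$ contains no cycle vertex. Concretely, in the $B_i$ version of no-crossing you have $w,w'\in B_i$, $a\in A_{i-2}\cup B_{i-1}$ with $N(a)\cap C\subseteq\{c_{i-3},c_{i-1}\}$, and $b\in B_{i+1}\cup A_{i+2}$ with $N(b)\cap C\subseteq\{c_{i+1},c_{i+3}\}$; your uniform splice $c_{i-2}\hbox{--}a\hbox{--}w\hbox{--}b\hbox{--}c_{i+2}$ is not even a walk here, and the natural substitute $c_{i-1}\hbox{--}a\hbox{--}w\hbox{--}b\hbox{--}c_{i+1}$ acquires the chord $ac_{i-3}$ whenever $a\in A_{i-2}$ (and $bc_{i+3}$ whenever $b\in A_{i+2}$). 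The rerouting can be repaired by attaching $a$ at $c_{i-3}$ or $c_{i-1}$ and $b$ at $c_{i+1}$ or $c_{i+3}$ according to which of the four cases one is in, producing induced cycles of length $m+2$, $m$, or $m-2$; in the last case ($a\in A_{i-2}$ and $b\in A_{i+2}$) the contradiction is not a failure of domination but the existence of a forbidden $C_8$/$C_9$ or of a hole shorter than $C$. So the statement is still provable by your method, but it requires exactly the kind of case analysis you claim to avoid, and as written the proof of \eqref{item:weak_order_Bi} is missing. (The paper itself does not treat \eqref{item:weak_order_Bi} as a formal symmetry either: it uses $X_3$ there versus $X_2$ for $A_i$.)
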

\begin{proof}
Let $i \in [m-1]$ be fixed.
To prove that $(A_i,<_{A_i})$ is a strict partial order, we need to show that $<_{A_i}$ is irreflexive and transitive. The irreflexivity follows from the definition, in aim to show transitivity, we first prove that $<_{A_i}$ is antisymmetric.
Suppose for a contrary that there are $u,u' \in A_i$ such that $u <_{A_i} u'$, and $u' <_{A_i} u$.
Suppose $u <_{A_i} u'$ is witnessed by a vertex $w \in B_{i-2} \cup A_{i-1}$ such that $u \in N(w)$ and  $u' \notin N(w)$; the other case $w \in A_{i+1} \cup B_{i+2}$ is analogous.
By Proposition~\ref{prop:neighbourhoods_in_AiBi}.\eqref{item:neighbourhoods_in_Ai}, there is no $w' \in B_{i-2} \cup A_{i-1}$ such that $u' \in N(w')$ and $u \notin N(w')$.
Hence, since $u' <_{A_i} u$, there must be a vertex $w' \in A_{i+1} \cup B_{i+2}$ such that
$u \in N(w')$ and $u' \notin N(w')$.
We have $uc_{i+1}, u'c_{i+1} \in E$ and $uc_{i+2},u'c_{i+2}, uc_{i+3},u'c_{i+3} \notin E$ as $\{u,u'\} \subseteq A_i$.
We have also $wc_{i+1},wc_{i+2},wc_{i+3},w'c_{i+1},w'c_{i+3} \notin E$ and $w'c_{i+2} \in E$ as $w \in B_{i-2} \cup A_{i-1}$ and $w' \in A_{i+1} \cup B_{i+2}$.
Moreover, $uu', ww'\notin E$, by Proposition~\ref{prop:neighbourhoodd_containment}. 
Consequently, the set $\{w,w',c_{i+1},c_{i+3}, u,u',c_{i+2}\}$ induces a copy of $X_2$ in $G$, which cannot be the case.

To show transitivity, suppose for a contrary that there are vertices $u,u',u''\in A_i$ such that $u <_{A_i} u'$ and $u' <_{A_i} u''$, but $u <_{A_i} u''$ does not hold.
Suppose $u <_{A_i} u'$ is witnessed by a vertex $w \in B_{i-2} \cup A_{i-1}$ 
such that $u \in N(w)$ and $u' \notin N(w)$; the other case $w \in B_{i+1} \cup A_{i+2}$ is symmetric.
We have $u'' \in N(w)$ as otherwise $u <_{A_i} u''$, by definition.
Suppose $u' <_{A_i} u''$ is witnessed by a vertex $w' \in B_{i-2} \cup A_{i-1} \cup A_{i+1} \cup B_{i+2}$.
Note that if $w' \in B_{i-2} \cup A_{i-1}$, then $u'\notin N(w')$ and $u' \in N(w')$, which enforces also $u \in N(w')$ as $u <_{A_i} u'$ and we already proved that $<_{A_i}$ is antisymmetric.
Thus, $u \in N(w')$ and $u'' \notin N(w')$, which shows $u <_{A_i} u''$.
Hence, we must have $w' \in A_{i+1} \cup B_{i+2}$, and so $u' \notin N(w')$ and $u'' \in N(w')$.
Moreover, $u \in N(w')$ as otherwise $u <_{A_i} u''$.
As $\{u',u''\} \subseteq A_i$, $w \in A_{i-2} \cup B_{i+1}$, and $w' \in A_{i+1} \cup B_{i+2}$, we have $uu', ww'\notin E$, by Proposition~\ref{prop:neighbourhoodd_containment}. 
Consequently, $\{w, w',c_{i+1},c_{i+3},u',u'',c_{i+2}\}$ induces a copy of $X_2$ in $G$, which is not possible. 
We conclude that $(A_i,<_{A_i})$ is a strict partial order.

By definition,  if $N(u)=N(u')$, then $u$ and $u'$ are incomparable in $(A_i,<_{A_i})$.
Hence, for the second statement of \eqref{item:weak_order_Ai}, it is enough to show that  $N(u)\neq N(u')$ implies that $u$ and $u'$ are comparable in $(A_i,<_{A_i})$.
Let $w$ be a vertex such that $wu \in E$ and $wu' \notin E$. By Proposition \ref{prop:neighbourhoodd_containment}.(\ref{item:neighbourhood_containment_AiBi}) and (\ref{item:neighbourhood_containment_Ai}), $w \in B_{i-2} \cup A_{i-1} \cup A_{i+1} \cup B_{i+2}$.
However, if $w \in B_{i-2} \cup A_{i-1}$ then $u <_{A_i} u'$ and if $w \in A_{i+1} \cup B_{i+2}$ then $u' <_{A_i} u$, by definition.
Hence, $u$ and $u'$ are comparable in~$<_{A_i}$.

The proof of \eqref{item:weak_order_Bi} is similar. For antisymmetry, suppose that we have $w, w' \in B_i$ such that $w <_{B_i} w'$ and $w' <_{B_i} w$. Let $w <_{B_i} w'$ and $w' <_{B_i} w$ be witnessed by $u$ and $u'$ from $A_{i-2} \cup B_{i-1} \cup B_{i+1} \cup A_{i+2}$, respectively.
Analogously to (1), by Proposition~\ref{prop:neighbourhoods_in_AiBi}.\eqref{item:neighbourhoods_in_Bi}, we can assume that $u \in A_{i-2} \cup B_{i-1}$ and $u' \in B_{i+1} \cup A_{i+2}$ and $uw,u'w \in E$, $uw',u'w' \notin E$. 
Observe that
 the set $\{c_{i-1},w,w',c_{i+1},u,c_i,u'\}$ induces a copy of $X_3$ in $G$,
a contradiction.

For transitivity of $<_{B_i}$, suppose that for some $w,w'$, and $w'' \in B_i$ we have $w <_{B_i} w'$ and $w' <_{B_i} w''$, but $w <_{B_i} w''$ does not hold. 
By symmetry of the proof of \eqref{item:weak_order_Ai}, we reach the case $u \in A_{i-2} \cup B_{i-1}$ and $u' \in B_{i+1} \cup A_{i+2}$, 
$uw, uw'',u'w,u'w'' \in E$ and $uw',u'w' \notin E$. 
Then, one can easily check that the set $\{w',w'',c_{i+1},u,c_i,u',c_{i+2}\}$ induces a copy of $X_2$ in $G$, a contradiction.

Now, assume that $N(w)\neq N(w')$. Without 
loss of generality assume that there exists $u \in A_{i-2} \cup B_{i-1} \cup B_{i+1} \cup A_{i+2}$ such that $u \in N(w) \setminus N(w')$. Analogously as before, observe that if $u \in A_{i-2} \cup B_{i-1}$ then $w <_{B_i} w'$ and if $u \in B_{i+1} \cup A_{i+2}$ then $w' <_{B_i} w$.
Therefore, $w$ and $w'$ are comparable, which finishes the proof.
\end{proof}

Finally, for every $i \in [m-1]$ we order arbitrarily the elements inside every antichain of $(A_i,{<_{A_i}})$ and of $(B_i,{<_{B_i}})$, obtaining strict
linear orders $(A_i,{<_{A_i}})$ and $(B_{i},{<_{B_i}})$. 
We introduce a binary relation $\prec$ defined on the set $V$, such that $v \prec v'$ for $v,v' \in V$ if one of the following conditions holds for some $i \in [m-1]$:
\begin{itemize}
 \item $v,v' \in A_i$, $v <_{A_i} v'$, and $v,v'$ are consecutive in $(A_i,{<_{A_i}})$,
 \item $v,v' \in B_i$, $v <_{B_i} v'$, and $v,v'$ are consecutive in $(B_i,{<_{B_i}})$,
 \item $v$ is the maximum of $(A_i,{<_{A_i}})$ and $v'$ is the minimum of $(B_{i+1},{<_{B_{i+1}}})$,
 \item $v$ is the maximum of $(B_i,{<_{B_i}})$ and $v'$ is the minimum of $(A_{i+1},{<_{A_{i+1}}})$.
\end{itemize}
Informally, to get an embedding of $G$ into a cylinder (the shortest hole is even) or into a M\"obius strip (the shortest hole is odd) which locally satisfies the adjacency and the
enclosure properties, we place the vertices $v,v'$ satisfying $v \prec v'$ next to each other, $v$  before $v'$ assuming that the border of the cylinder or the M\"obius strip 
are oriented as shown in Figure~\ref{fig:cylinder}.
In what follows we extend ${\prec}$ relation as follows:

\begin{itemize}
\item For every $V' \subsetneq V$ by ${<_{V'}}$ we denote the transitive closure of ${\prec}$ restricted to~$V'$,
\item For $v,v' \in V$ we set $v <_{cl} v'$ if $v,v' \in A[i-2,i+2]$ and $v <_{A[i-2,i+2]} v'$ for some $i \in [m-1]$ 
or $v,v' \in B[i-2,i+2]$ and $v <_{B[i-2,i+2]} v'$ for some $i \in [m-1]$.
\end{itemize}

Finally, the following lemma characterizes the global structure of an almost bipartite permutation graph. 

\begin{lemma}%
\label{lem:slices_are_bipartite_permutation}
Let $i,j$ be such that $i \leq j$, $|j-i| = m-3$. Let $U = A[i,j]$ and $W=B[i,j].$ 
Then $G[U \cup W]$ is a bipartite permutation graph with bipartition classes $U$ and~$W$.

Moreover, $(U,{<_U})$ and $(W,{<_W})$ are strict linear orders that satisfy the adjacency and enclosure properties in $G[U \cup W]$.
\end{lemma}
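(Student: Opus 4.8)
The plan is to verify the two assertions of the lemma in order: first that $G[U\cup W]$ is bipartite with the claimed bipartition, then that it is a permutation graph by exhibiting the orders $(U,{<_U})$ and $(W,{<_W})$ and checking the adjacency and enclosure properties. The choice $|j-i|=m-3$ is exactly what makes the ``wrap-around'' harmless: the slice $V[i,j]$ omits three consecutive indices of $C$, so no vertex of $U$ can have a neighbor that lies on the ``far side'' of the cylinder, and the local containment structure established in Propositions~\ref{prop:neighbourhoodd_containment}--\ref{prop:weak_orders} governs everything. First I would record that $U\cup W=V[i,j]$, that $U=A[i,j]$ and $W=B[i,j]$ partition it, and that by Proposition~\ref{prop:neighbourhoodd_containment}\eqref{item:neighbourhood_containment_Ai}--\eqref{item:neighbourhood_containment_Bi} every neighbor of a vertex of $A_\ell\subseteq U$ lies in $B[\ell-2,\ell+2]\subseteq W$ (and symmetrically for $B_\ell\subseteq W$), so all edges of $G[U\cup W]$ run between $U$ and $W$; this gives bipartiteness immediately.

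For the permutation-graph part I would invoke Theorem~\ref{thm:bip_char}, specifically the equivalence of \eqref{thm:bip_char_1} and \eqref{thm:bip_char_2}: it suffices to show that $(U,{<_U})$ and $(W,{<_W})$ form a strong ordering. Here $(U,{<_U})$ is the linear order on $U=A[i,j]$ obtained as the transitive closure of $\prec$ restricted to $U$, which concatenates the linear orders $(A_\ell,{<_{A_\ell}})$ for the relevant even-offset indices $\ell$ in increasing order of $\ell$, and similarly $(W,{<_W})$ concatenates the $(B_\ell,{<_{B_\ell}})$'s. The key point is that on the slice these transitive closures are genuine linear orders (no index is encountered twice, precisely because three indices are dropped), so $<_U$ and $<_W$ are well defined strict linear orders. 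The heart of the argument is the strong-ordering condition: given edges $(u,w')$ and $(u',w)$ with $u<_U u'$ and $w<_W w'$, I must produce the edges $(u,w)$ and $(u',w')$. I would split this according to the blocks $A_\ell$ containing $u,u'$ and $B_{\ell'}$ containing $w,w'$; when $u,u'$ lie in the same block $A_\ell$ the comparison $u<_U u'$ reduces to $u<_{A_\ell}u'$, and the containment statements of Proposition~\ref{prop:neighbourhoods_in_AiBi} together with the characterization in Proposition~\ref{prop:weak_orders} (incomparable iff equal neighborhoods) force the neighborhoods restricted to a fixed $B$-block to be nested the right way; when they lie in different blocks the index ordering of $\prec$ and the locality of neighborhoods (Proposition~\ref{prop:neighbourhoodd_containment}) pin down which cross-edges can exist. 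The adjacency and enclosure properties then follow, since a strong ordering always satisfies them, as noted just before Theorem~\ref{thm:bip_char}.

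The main obstacle I anticipate is the careful bookkeeping in the strong-ordering verification, in particular reconciling the two clauses in the definition of $<_{A_i}$ (witnesses on the $B_{i-2}\cup A_{i-1}$ side versus the $A_{i+1}\cup B_{i+2}$ side) with the block-structure of $w,w'$, and handling the boundary blocks $A_i,A_j,B_i,B_j$ of the slice where one of the two neighbor-directions is truncated. I expect that each case is resolved by combining a comparability statement from Proposition~\ref{prop:neighbourhoods_in_AiBi} with the equal-neighborhoods criterion of Proposition~\ref{prop:weak_orders}, and that no case requires a new forbidden-subgraph argument beyond those already used to establish those propositions; the work is in organizing the case distinction so that every pair of edges is covered exactly once.
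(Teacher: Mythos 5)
Your plan is correct, but it reaches the conclusion by a genuinely different route than the paper. The paper verifies condition~\eqref{thm:bip_char_3} of Theorem~\ref{thm:bip_char}: it first proves the adjacency property for $(W,{<_W})$ (for $u\in A_k$ the neighbourhood $N(u)$ is an interval of $W$ around $B_k$, by the containments of Proposition~\ref{prop:neighbourhoods_in_AiBi} and the antisymmetry statements of Proposition~\ref{prop:weak_orders}), and then proves the enclosure property by contradiction, the key step being a separate claim that the two vertices $u,u'$ witnessing a violation must lie in a common block $A_k$ or $B_k$, after which the definition of $<_{A_k}$ (resp.\ $<_{B_k}$) is contradicted. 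You instead verify condition~\eqref{thm:bip_char_2}, the strong-ordering property of the specific pair $(U,{<_U})$, $(W,{<_W})$, and recover adjacency and enclosure from the remark preceding Theorem~\ref{thm:bip_char}. Both routes use exactly the same toolbox --- the ``moreover'' containments of Proposition~\ref{prop:neighbourhoods_in_AiBi} and the antisymmetry/comparability criterion of Proposition~\ref{prop:weak_orders} --- and your expectation that no new forbidden-subgraph argument is needed checks out: in every block configuration compatible with the hypotheses of the strong-ordering condition, each of the two required edges is forced either by antisymmetry of the relevant block order or by one containment of Proposition~\ref{prop:neighbourhoods_in_AiBi}, provided you center that proposition at the block of the \emph{target} vertex rather than at the block of $u$ or $u'$ (e.g.\ to force $uw\in E$ for $u\in A_k$, $w\in A_{k-1}$ from the edge $u'w$ with $u'\in B_{k+1}$, apply Proposition~\ref{prop:neighbourhoods_in_AiBi}.\eqref{item:neighbourhoods_in_Ai} with index $k-1$). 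The cost of your route is a larger and more delicate case enumeration over pairs of blocks; what it buys is strictly more than the paper's proof literally delivers, namely that $<_{cl}$ restricted to a slice is itself a strong ordering --- which is precisely the form in which Lemma~\ref{lem:slices_are_bipartite_permutation} is later invoked in the proof of Proposition~\ref{prop:holes}, where the paper's adjacency-and-enclosure statement only yields it via the (not entirely order-preserving) equivalence of Theorem~\ref{thm:bip_char}. Two small imprecisions to fix when writing it up: $U=A[i,j]$ and $W=B[i,j]$ each consist of \emph{alternating} $A$- and $B$-blocks, not only $A$-blocks or only $B$-blocks; and the inclusion $B[\ell-2,\ell+2]\subseteq W$ should be read after intersecting with $V[i,j]$, since near the boundary indices $i,j$ the window $[\ell-2,\ell+2]$ sticks out of the slice.
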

\begin{proof}
Proposition~\ref{prop:neighbourhoodd_containment} asserts there is no edge between a vertex in $V[j-1,j]$ and a vertex in $V[i,i+1]$. 
In particular, $G[U \cup W]$ is a bipartite graph and $(U,{<_U})$ and $(W,{<_W})$ are strict linear orders.  
Given Theorem \ref{thm:bip_char}.(\ref{thm:bip_char_3}), to prove the lemma we need to show that $(W,{<_W})$ satisfies the adjacency and enclosure properties in $G[U \cup W]$.

To prove the adjacency property, consider $u\in A_k\subseteq U$ for some suitable $k$. 
Recall that by Proposition~\ref{prop:neighbourhoodd_containment}.\eqref{item:neighbourhood_containment_Ai}, $B_k\subseteq N(u) \subseteq B[k-2,k+2]$. To show that $N(u)$ consists of consecutive vertices in $W$ it suffices to note that:
\begin{itemize}
    \item if $w\in A_{k+1}$, $w'\in B_{k+2}$ and $uw'\in E$ then $uw\in E$, by Proposition~\ref{prop:neighbourhoods_in_AiBi},
    \item if $w,w'\in A_{k+1}$ (resp. $w,w'\in B_{k+2}$) are such that $w <_{A_{k+1}} w'$ (resp. $w <_{B_{k+2}} w'$) and $uw'\in E$, then $uw\in E$, by Proposition~\ref{prop:weak_orders},
\end{itemize}
and that analogous statements hold by symmetry for the part of $N(u)$ contained in $A_{k-1}\cup B_{k-2}$. If $u\in B_k\subseteq U$, the case analysis is similar (one needs to swap letters $A$ and $B$ in the reasoning above). Therefore, the adjacency property is proved.
 
To show that $(W,{<_W})$ satisfies the enclosure property assume for a contradiction that there are $w,w',w'' \in W$ and $u,u' \in U$ such that $N(u')\subseteq N(u)$, $w <_{W} w' <_{W} w''$ and $uw,uw',uw'' \in E$,
$u'w' \in E$, and $u'w,u'w'' \notin E$.

\begin{claim}
There is $k \in [i,j]$ %
such that  either $u,u' \in A_k$, or $u,u' \in B_k$.
\end{claim}
\begin{proof}[Proof of Claim]
If $u\in B_{k}$, then since $N(u') \cap C \subseteq N(u) \cap C = \{c_k\}$, we have $u' \in B_k$, so the claim holds.
Therefore, assume that $u\in A_{k}$, and suppose that $u' \notin A_k$. 
Then $N(u') \cap C \subseteq N(u) \cap C = \{c_{k-1},c_{k+1}\}$.
Assuming $u <_{U} u'$ (the other case is symmetric), we have that $u' \in B_{k+1}$.
Due to Proposition~\ref{prop:neighbourhoodd_containment} and $w' <_W w''$ we have $w',w'' \in A[k-1,k+2]$. 
Moreover, as we already proved that 
$N(u')$ is consecutive in $(W,<_W)$ (adjacency property), 
and $c_{k+1}\in N(u')$, we have $c_{k+1}<_W w''$. Therefore $w''\in A_{k+1}\cup B_{k+2}$.
Note that:
\begin{itemize}
    \item if $w''\in B_{k+2}$, then since $uw''\in E$, by Proposition~\ref{prop:neighbourhoods_in_AiBi}.\eqref{item:neighbourhoods_in_Bi}, we have $u'w''\in E$, a contradiction,
    \item if $w'' \in A_{k+1}$, then by Proposition~\ref{prop:neighbourhoodd_containment} we would have $u'w''\in E$, a contradiction. 
   \qedhere
\end{itemize}
\end{proof}

Suppose $u,u' \in A_k$.
Since $u'w, u'w'' \notin E$ and $u'c_{k-1}, u'c_{k+1}\in E$, we have by adjacency property of $(W,<_{W})$ that $w<_{W}c_{k-1}<_W w''$.
Therefore, we must have $w \in B_{k-2} \cup A_{k-1}$ and $w'' \in A_{k+1} \cup B_{k+2}$.
Observe that $w''$ witnesses that $u'<_{A_k} u$ by definition, however, $w$ witnesses the opposite, that is $u<_{A_k} u'$. 
We have a contradiction by Proposition~\ref{prop:weak_orders}.

Suppose $u,u' \in B_k$.
An analysis, which is analogous to the one in the previous case (again, it is enough to swap letters $A$ and $B$ in that reasoning above), gives us that we must have $w \in A_{k-2} \cup B_{k-1}$ and $w'' \in B_{k+1} \cup A_{k+2}$.
Again, we obtain a contradiction by the definition of $<_{B_k}$ and  Proposition~\ref{prop:weak_orders}.
\end{proof}

Lemma~\ref{lem:slices_are_bipartite_permutation} provides an interesting view on classification of almost bipartite permutation graphs. Specifically, if %
$m$ is even,
then the graph may be drawn on a cylinder, whose boundary consists of two closed curves, one of which traverses vertices of $A[0,m-1]$, and the second one---the vertices of $B[0,m-1]$. If in turn %
$m$ is odd,
then the graph can be represented on a M\"obius strip, whose boundary traverses consecutive vertices of $A[0,m-1]$ and then $B[0,m-1]$ (recall Figure~\ref{fig:cylinder}).

The following definitions are taken from \cite{HV13}. 
A \emph{hole cut} of $G$ is a vertex set $X \subseteq V$ such that $G - X$ is a bipartite permutation graph.
Lemma \ref{lem:slices_are_bipartite_permutation} asserts that for every $i \in [m-1]$ the set $V[i,i+1]$ is a hole cut in $G$. 
A hole cut $X$ of $G$ is \emph{minimum} if $G$ does not have a hole cut whose size is strictly smaller than the size of $X$.
A~hole cut $X$ of $G$ is \emph{minimal} if any proper subset of $X$ is not a hole cut in~$G$.

The next proposition describes the structure of every hole in $G$.
\begin{proposition}\label{prop:holes}
Suppose $C'$ is a hole of size $k$ in $G$ for some $k \geq m$. 
Then, the consecutive 
vertices of $C'$ can be labeled by $c'_0,c'_1,\ldots,c'_{k-1}$ so as the following conditions hold (the indices are taken modulo $k$):
\begin{itemize}
 \item $c'_ic'_{i+1} \in E$ for every $i \in [k-1]$,
 \item $c'_{i} <_{cl} c'_{i+2}$ for every $i \in [k-1]$,
 \item $C' \cap \{c'': c'_i <_{cl} c'' <_{cl} c'_{i+2}\} = \emptyset$ for every $i \in [k-1]$. 
\end{itemize}
\end{proposition}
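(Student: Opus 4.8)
The plan is to take the labeling $c'_0,\dots,c'_{k-1}$ to be the cyclic order of $C'$ itself (so the edge condition $c'_ic'_{i+1}\in E$ holds by definition) and to fix a starting vertex and an orientation so that the hole winds monotonically forward in $<_{cl}$. First I would record the elementary facts. Since $C'$ is an induced cycle, each $c'_{i+1}$ is adjacent to $c'_i$ and $c'_{i+2}$, so by Proposition~\ref{prop:neighbourhoodd_containment} the slice index of a neighbour of a vertex in slice $t$ lies in $[t-2,t+2]$; hence the slice indices of consecutive hole vertices differ by at most $2$. Writing $c'_{i+1}$ as a vertex of slice $t$, its two hole-neighbours $c'_i,c'_{i+2}$ lie on the opposite side of the local bipartition, i.e. both in $B[t-2,t+2]$ (if $c'_{i+1}\in A_t$) or both in $A[t-2,t+2]$ (if $c'_{i+1}\in B_t$), and their slice indices differ by at most $4$. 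Thus $c'_i$ and $c'_{i+2}$ lie in a common window $A[s-2,s+2]$ or $B[s-2,s+2]$, on which $<_{cl}$ restricts to a linear order; in particular $c'_i$ and $c'_{i+2}$ are always $<_{cl}$-comparable.

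The heart of the argument is to show that the same-side subsequence $\dots,c'_i,c'_{i+2},c'_{i+4},\dots$ has no local $<_{cl}$-extremum, i.e. that it never backtracks. Suppose for contradiction that $c'_{i+2}$ is a local maximum, so $c'_i<_{cl}c'_{i+2}$ and $c'_{i+4}<_{cl}c'_{i+2}$; say $c'_{i+2}\in A_t$ (the case $B_t$ is symmetric). Any four consecutive hole vertices span at most seven consecutive slices, so the relevant four vertices lie in a slice $V[a,b]$ with $b-a\le m-3$, which by Lemma~\ref{lem:slices_are_bipartite_permutation} is a bipartite permutation graph on which the two sides carry the strong ordering $((U,{<_U}),(W,{<_W}))$ furnished by its cylinder embedding, coinciding with $<_{cl}$. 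Here $c'_{i+2},c'_i,c'_{i+4}\in U$ and $c'_{i+1},c'_{i+3}\in W$, with $c'_i<_U c'_{i+2}$ and $c'_{i+4}<_U c'_{i+2}$. If $c'_{i+1}<_W c'_{i+3}$, then the hole edges $c'_{i+1}c'_{i+2}$ and $c'_{i+3}c'_{i+4}$ form a crossing pair, so the strong ordering (Theorem~\ref{thm:bip_char}) forces the edge $c'_{i+1}c'_{i+4}$; if $c'_{i+3}<_W c'_{i+1}$, then the crossing pair $c'_ic'_{i+1}$ and $c'_{i+2}c'_{i+3}$ forces the edge $c'_ic'_{i+3}$. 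Either way $C'$ acquires a chord joining two vertices at distance three along the cycle, contradicting that $C'$ is induced of length $k\ge m\ge 10$. The mirror argument excludes local minima.

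With no local extrema, along $C'$ the same-side subsequence is, at every step, comparable and hence either increasing or decreasing in $<_{cl}$; choosing the orientation so that it increases at one step forces it to increase at every step, giving $c'_i<_{cl}c'_{i+2}$ for all $i$ and establishing the second condition. For the third condition, observe that $<_{cl}$ only compares vertices that lie on the same side within a common window, so any $c''\in C'$ with $c'_i<_{cl}c''<_{cl}c'_{i+2}$ would be a same-side hole vertex; but the chain $\dots<_{cl}c'_i<_{cl}c'_{i+2}<_{cl}\dots$ is strictly increasing and already lists \emph{all} same-side hole vertices consecutively, so no such $c''$ exists.

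I expect the backtracking step to be the main obstacle, in two respects. The combinatorial core is setting up a single bipartite permutation slice containing the four vertices needed to extract the forbidden chord, which is exactly where the bound $m\ge 10$ enters, together with confirming that the orders $<_U,<_W$ inherited from $<_{cl}$ genuinely form a strong ordering (not merely satisfy adjacency and enclosure, which alone would be insufficient). A secondary point requiring care is the parity bookkeeping: for an odd hole the Möbius case merges the two same-side subsequences into one cyclic sequence winding once around the strip, and one should check that the purely local extremum argument goes through unchanged. As a sanity check and for later use, I would also note that every slice $V[a,a+m-3]$ is hole-free by Lemma~\ref{lem:slices_are_bipartite_permutation}, so $C'$ must meet every two-slice set $V[i,i+1]$, confirming that $C'$ genuinely wraps around the cylinder (or strip).
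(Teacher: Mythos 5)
Your overall strategy --- placing a short window of consecutive hole vertices inside a slice covered by Lemma~\ref{lem:slices_are_bipartite_permutation} and using the strong ordering to manufacture a forbidden chord at distance three --- is the same as the paper's, and both of your crossing-pair computations are correct. However, two places in the logic do not close. First, the monotonicity step: the paper proves the implication $c'_j <_{cl} c'_{j+2} \Rightarrow c'_{j+1} <_{cl} c'_{j+3}$, which propagates the direction of travel from one same-side subsequence to the \emph{other}; you instead prove ``no local extremum'' within a single same-side subsequence. For an even hole $C'$ the even-indexed and odd-indexed vertices form two disjoint cyclic subsequences, and knowing that each is separately monotone does not tell you that they increase in the same direction, which is what $c'_i <_{cl} c'_{i+2}$ for \emph{every} $i$ requires. (Your parity worry is aimed at the M\"obius case, but the even hole is where the linking is actually needed.) This gap is repairable from material you already have: your second sub-case derives the chord $c'_ic'_{i+3}$ from $c'_i <_U c'_{i+2}$ and $c'_{i+3} <_W c'_{i+1}$ alone, never using the local-maximum hypothesis on $c'_{i+4}$, and that is exactly the paper's propagation step.

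Second, and more seriously, the proof of the third bullet does not work as written. You argue that the chain $\cdots <_{cl} c'_i <_{cl} c'_{i+2} <_{cl} \cdots$ is strictly increasing and ``already lists all same-side hole vertices consecutively, so no such $c''$ exists.'' But $<_{cl}$ is only a union of local linear orders on the windows $A[\ell-2,\ell+2]$ and $B[\ell-2,\ell+2]$; it is not globally transitive, and the chain is cyclic, so local monotonicity does not by itself prevent the hole from winding around the cylinder more than once and dropping a far-away vertex $c'_{i+2t}$ back into the window strictly between $c'_i$ and $c'_{i+2}$. Excluding precisely this multiple-winding configuration is the content of the third bullet, so the argument is circular as stated. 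The paper's fix is local and uses the adjacency property instead: any $c''$ with $c'_i <_{cl} c'' <_{cl} c'_{i+2}$ must lie in $N(c'_{i+1})$, since $N(c'_{i+1})$ is a $<_{cl}$-interval containing both $c'_i$ and $c'_{i+2}$; hence such a $c''$ on $C'$ would create the chord $c''c'_{i+1}$.
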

\begin{proof}
Let $c'_0=v_i$, $c'_1$, $c'_2$, $\dots$, $c'_{n-1}$ be consecutive vertices of $C'$ denoted in such a way that $c'_0<_{cl} c'_2$. %
We note that we can suppose it as $c'_0, c'_2\in N(c'_1)$, thus, by Proposition~\ref{prop:neighbourhoodd_containment}.(\ref{item:neighbourhood_containment_Ai})~and~(\ref{item:neighbourhood_containment_Bi}), both $c'_0,c'_2$ belong to $A[\ell-2,\ell+2]$ or both belong to $B[\ell-2,\ell+2]$ for some $\ell\in[m-1]$.

Now, we show that if there exists $j\in[m-1]$ such that $c'_{j}<_{cl} c'_{j+2}$, then $c'_{j+1}<_{cl} c'_{j+3}$.
Suppose, for contradiction that $c'_{j}<_{cl} c'_{j+2}$ and $c'_{j+1}\nless_{cl} c'_{j+3}$. Let $i\in [m-1]$ be such that $c_{j+2}\in (A_i\cup B_i)$.
Similarly, as $c'_{j+1},c'_{j+3}\in N(c'_{j+2})$, either $c'_{j+1}, c'_{j+3} \in B[i-2,i+2]$ if  $c'_{j+2}\in A_i$ or $c'_{j+1},c'_{j+3}\in A[i-2,i+2]$ if $c'_{j+2}\in B_i$.
In both cases Lemma \ref{lem:slices_are_bipartite_permutation} implies that $<_{cl}$ restricted to $V[i-4,i+2]$ is a strong ordering of $G[V[i-4,i+2]]$.
Moreover, $c'_{j+1},c'_{j+3}$ are comparable in $<_{cl}$, by Proposition~\ref{prop:neighbourhoodd_containment}.(\ref{item:neighbourhood_containment_Ai})~and~(\ref{item:neighbourhood_containment_Bi}), thus, $c'_{j+3}<_{cl} c'_{j+1}$.
Therefore Theorem~\ref{thm:bip_char}.(\ref{thm:bip_char_2}) yields $c'_jc'_{j+3}\in E$, so a chord in~$C'$---contradiction. 
Therefore $c'_{j}<_{cl} c'_{j+2}$ implies $c'_{j+1}<_{cl} c'_{j+3}$ for every integer $j$.
Applying above observation repeatedly for $j=0,1,2,\ldots$, we get that $c'_0<_{cl} c'_2<_{cl} c'_4<_{cl}\ldots$ and $c'_1<_{cl} c'_3<_{cl} c'_5<_{cl}\ldots$ 

Suppose for the sake of contradiction that there exists $j\notin\{i,i+2\}$ such that $c'_i <_{cl} c'_j <_{cl} c'_{i+2}$. Then by Lemma~\ref{lem:slices_are_bipartite_permutation}, $c'_jc'_{i+1}\in E$ due to the adjacency property. But the edge $c'_jc'_{i+1}$ is a chord in $C'$. This completes the proof.%
\end{proof}
The structure of holes described above asserts that for every $i \in [m-1]$ the sets
$A[i-2,i+2]$ and $B[i-2,i+2]$ are hole cuts.
We use this observation to prove the following statement about minimal hole cuts in $G$.
\begin{proposition}\label{prop_min_cut}
Every minimal hole cut $X$ in $G$ is fully contained in the set
$V[i-2,i+2]$ for some $i \in [m-1]$. 
\end{proposition}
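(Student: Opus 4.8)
The plan is to show that a minimal hole cut $X$ cannot be "spread out" around the cylinder/M\"obius strip, but must be concentrated near five consecutive indices. The key intuition is this: since $X$ is minimal, every vertex $x \in X$ is \emph{necessary}, meaning $G - (X \setminus \{x\})$ still contains a hole passing through (or blocked by) $x$. By Proposition~\ref{prop:holes}, every hole $C'$ in $G$ winds once around the structure, using vertices whose $<_{cl}$-indices increase monotonically as one traverses $C'$. So to destroy all holes, $X$ must intersect every such winding cycle. The claim is that any \emph{minimal} set doing so lives inside a single band $V[i-2,i+2]$.

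First I would set up the contrapositive framework: suppose $X$ is a hole cut that is \emph{not} contained in any $V[i-2,i+2]$, and show $X$ is not minimal by exhibiting a proper subset of $X$ that is still a hole cut. To make "spread out" precise, I would use the cyclic index structure on $\{A_i, B_i\}$. For each vertex $x \in X$, associate the index $i(x)$ of the part $A_{i(x)}$ or $B_{i(x)}$ containing $x$. The set $X$ then projects to a set of indices on the cyclic group $\mathbb{Z}/m\mathbb{Z}$ (with the M\"obius twist handled by the doubling identification when $m$ is odd). The heart of the argument is that if the projected indices fit within a window of length $4$ (i.e.\ five consecutive indices $i-2,\dots,i+2$), we are done; otherwise there is a "gap" in the cyclic arrangement.

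The main structural step is: if the index projection of $X$ is spread across a range wider than a single band, then there must be two maximal runs of consecutive covered indices separated by gaps. I would argue that a minimal hole cut cannot have two such separated clusters, because each cluster would independently have to block all holes (every hole winds all the way around and thus crosses \emph{every} radial line), so one cluster alone already forms a hole cut, contradicting minimality of $X$. Concretely, fix a gap: an index $g$ with $A_g \cup B_g$ disjoint from $X$ such that $X$ has vertices "before" and "after" $g$ in both cyclic directions. By Proposition~\ref{prop:holes}, any hole $C'$ meets the band around every index, in particular it passes through the region near $g$; thus removing the vertices of $X$ on one side of the two gaps ought to leave a smaller set that still meets every hole, since the remaining cluster spans a contiguous arc covering a full band $V[i-2,i+2]$, which is itself a hole cut by the remark following Proposition~\ref{prop:holes}. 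That remark — that $A[i-2,i+2]$ and $B[i-2,i+2]$ are hole cuts — is exactly the tool that lets me replace a dispersed $X$ by a single band, contradicting minimality.

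The hard part will be making the topological "winding number" argument rigorous, especially the M\"obius case where $m$ is odd and the cyclic identification on indices is twisted (an index traversed once returns with $A$ and $B$ swapped, so a hole of length $\geq m$ may wrap around and the natural period is $2m$ rather than $m$). I expect the cleanest route is to avoid genuine topology and instead argue combinatorially: show directly that if $X \subseteq V[i-2,i+2]$ fails, pick an index $g$ not covered by $X$, and use Proposition~\ref{prop:holes} together with Lemma~\ref{lem:slices_are_bipartite_permutation} (each band $V[\ell,\ell+1]$ being a hole cut) to produce a hole in $G - X$ unless $X$ already crowds a single band. The delicate point is verifying that when $X$ is contained in a union of bands spanning a wide arc with an internal gap, one can genuinely \emph{delete} the vertices in one arc and still have a hole cut — this requires checking that the surviving cluster, occupying some $V[i-2,i+2]$, truly separates the two "ends" and hence blocks every winding hole. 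I would present this as the core lemma and expect the index bookkeeping, not any deep idea, to be the only real obstacle.
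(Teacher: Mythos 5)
There is a genuine gap at the heart of your plan. The step ``each cluster would independently have to block all holes (every hole winds all the way around and thus crosses every radial line), so one cluster alone already forms a hole cut'' is a non sequitur: Proposition~\ref{prop:holes} guarantees that every hole passes \emph{through the region} near every index, but not that it uses any vertex of $X$ there. A hole can slip past cluster~1 using vertices outside $X$ and be blocked only by cluster~2, and vice versa, so neither cluster alone need be a hole cut. To exclude a two-cluster minimal cut you must actually perform surgery: given a hole $H_1$ avoiding cluster~1 and a hole $H_2$ avoiding cluster~2, splice them together in the gap regions to produce a single hole avoiding all of $X$. You never carry out (or even state) this splicing; you flag it as ``the delicate point'' and then assert its conclusion. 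Relatedly, your sentence ``the remaining cluster spans a contiguous arc covering a full band $V[i-2,i+2]$, which is itself a hole cut'' conflates $X$ being \emph{contained in} a band with $X$ \emph{containing} a band; only the latter would let you invoke the fact that $A[i-2,i+2]$ and $B[i-2,i+2]$ are hole cuts.

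The paper's proof supplies exactly the missing mechanism, in a sharper form. It takes a maximal $\leq_{cl}$-consecutive run $X'\subseteq X$ flanked by vertices $z_1,z_2\notin X$, and proves two things: (i) every common neighbor $y\in N(z_1)\cap N(z_2)$ lies in $X$ --- otherwise a hole forced through some $x\in X'$ by minimality can be rerouted through $z_1,y,z_2$ to avoid $X$ entirely (this is the splicing step); and (ii) $X'\cup\bigl(N(z_1)\cap N(z_2)\bigr)$ is already a hole cut, since by Proposition~\ref{prop:holes} any surviving hole would have to jump over $X'$ via a vertex adjacent to both sides, i.e.\ via a member of $N(z_1)\cap N(z_2)$. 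Minimality then forces $X=X'\cup\bigl(N(z_1)\cap N(z_2)\bigr)$, and since $N(z_1)\cap N(z_2)\subseteq A[i-2,i+2]$ this also yields the width bound of five consecutive indices --- a bound your cluster picture would not give even if the redundancy argument were repaired, since a single contiguous cluster could a priori span more than five indices. Without an argument of this kind your approach does not close.
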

\begin{proof}
First, note that we can choose elements $z_1,x_1,x_2,z_2$ in $V$ and an index $i \in [m-1]$ such that the following conditions hold:
\begin{itemize}
 \item we have $z_1 \prec x_1 \leq_{cl} x_2 \prec z_2$, the set $X' = \{x: x_1 \leq_{cl} x \leq_{cl} x_2\}$ is non-empty and is contained in $X$, and the elements $z_1,z_2$ are not in $X$,
 \item the set $X' \cup \{z_1,z_2\}$ is contained in either $B[i-2,i+2]$ or in $A[i-2,i+2]$
 and we have $z_1 \in B[i-2,i]$ and $z_2 \in B[i,i+2]$ if 
$X' \cup \{z_1,z_2\} \in B[i-2,i+2]$, and similarly for the other case.
\end{itemize}
Note that such a choice of $z_1,x_1,x_2,z_2$ and $i$ is possible as the sets $A[j,j+3]$ and $B[j,j+3]$ are hole cuts for every $j \in [m-1]$, by combining Proposition~\ref{prop:holes} and Proposition~\ref{prop:neighbourhoodd_containment}.(\ref{item:neighbourhood_containment_Ai}),(\ref{item:neighbourhood_containment_Bi}).
For the rest of the proof we assume $X' \subset B[i-2,i+2]$, $z_1 \in B[i-2,i]$ and $z_2 \in B[i,i+2]$ (see Figure \ref{fig:minimal_hole_cut} for an illustration).

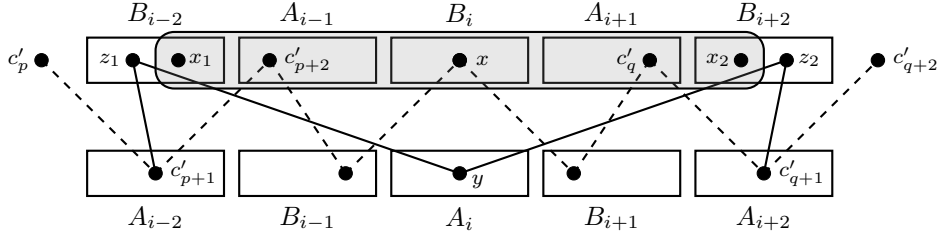
\begin{figure}[htp!]
\centering
\begin{tikzpicture}[xscale=2,yscale=1.5]
\coordinate (lcp0) at (-0.9,1) {};
\coordinate (cp0) at (-0.75,1) {};

\coordinate (cp1) at (0,0) {};
\coordinate (lcp1) at (0.25,0) {};

\coordinate (cp2) at (0.75,1) {};
\coordinate (lcp2) at (1,1) {};

\coordinate (cp3) at (1.25,0) {};
\coordinate (cp4) at (2,1) {};

\coordinate (cp5) at (2.75,0) {};

\coordinate (cp6) at (3.25,1) {};
\coordinate (lcp6) at (3.1,1) {};
\coordinate (cp7) at (4,0) {};
\coordinate (lcp7) at (4.25,0) {};
\coordinate (cp8) at (4.75,1) {};
\coordinate (lcp8) at (5.00,1) {};

\coordinate (z1) at (-0.15,1) {};
\coordinate (x1) at (0.15,1) {};
\coordinate (lz1) at (-0.3,1.0) {};
\coordinate (lx1) at (0.3,1.0) {};

\coordinate (x2) at (3.85,1) {};
\coordinate (z2) at (4.15,1) {};
\coordinate (lx2) at (3.7,1.0) {};
\coordinate (lz2) at (4.3,1) {};

\coordinate (x) at (2,1) {};
\coordinate (lx) at (2.15,1) {};

\coordinate (y) at (2,0) {};
\coordinate (ly) at (2.12,-0.07) {};

\coordinate (lB0) at (0.0,-0.4) {};
\coordinate (lA0) at (0.0,1.4) {};
\coordinate (lB1) at (1.0,-0.4) {};
\coordinate (lA1) at (1.0,1.4) {};
\coordinate (lB2) at (2.0,-0.4) {};
\coordinate (lA2) at (2.0,1.4) {};
\coordinate (lB3) at (3.0,-0.4) {};
\coordinate (lA3) at (3.0,1.4) {};
\coordinate (lB4) at (4.0,-0.4) {};
\coordinate (lA4) at (4.0,1.4) {};

\path (cp0) edge[thick, dashed] (cp1);
\path (cp1) edge[thick, dashed] (cp2);
\path (cp2) edge[thick, dashed] (cp3);
\path (cp3) edge[thick, dashed] (cp4);
\path (cp4) edge[thick, dashed] (cp5);
\path (cp5) edge[thick, dashed] (cp6);
\path (cp6) edge[thick, dashed] (cp7);
\path (cp7) edge[thick, dashed] (cp8);

\path (z1) edge[thick] (y);
\path (z2) edge[thick] (y);
\path (cp1) edge[thick] (z1);
\path (cp7) edge[thick] (z2);

\draw[thick] (-0.45,-0.2)--(-0.45,0.2) -- (0.45,0.2) -- (0.45,-0.2)--cycle;
\draw[thick] (-0.45,0.8)--(-0.45,1.2) -- (0.45,1.2) -- (0.45,0.8)--cycle;

\draw[thick] (0.55,-0.2)--(0.55,0.2) -- (1.45,0.2) -- (1.45,-0.2)--cycle;
\draw[thick] (0.55,0.8)--(0.55,1.2) -- (1.45,1.2) -- (1.45,0.8)--cycle;

\draw[thick] (1.55,-0.2)--(1.55,0.2) -- (2.45,0.2) -- (2.45,-0.2)--cycle;
\draw[thick] (1.55,0.8)--(1.55,1.2) -- (2.45,1.2) -- (2.45,0.8)--cycle;

\draw[thick] (2.55,-0.2)--(2.55,0.2) -- (3.45,0.2) -- (3.45,-0.2)--cycle;
\draw[thick] (2.55,0.8)--(2.55,1.2) -- (3.45,1.2) -- (3.45,0.8)--cycle;

\draw[thick] (3.55,-0.2)--(3.55,0.2) -- (4.45,0.2) -- (4.45,-0.2)--cycle;
\draw[thick] (3.55,0.8)--(3.55,1.2) -- (4.45,1.2) -- (4.45,0.8)--cycle;

\begin{scope}[fill opacity=0.2]
\draw[rounded corners=7, fill=gray!90, thick] (0.0, 0.75)--(0.0,1.25) -- (4,1.25) -- (4,0.75)--cycle;

\end{scope}

\tikzstyle{every node}=[circle,minimum size=5pt,inner sep=0pt,draw,fill]

\node at (z1) {};
\node at (x1) {};
\node at (z2) {};
\node at (x2) {};

\node at (x) {};
\node at (y) {};

\node at (cp0) {};
\node at (cp1) {};
\node at (cp2) {};
\node at (cp3) {};
\node at (cp4) {};
\node at (cp5) {};
\node at (cp6) {};
\node at (cp7) {};
\node at (cp8) {};

\tikzstyle{every node}=[inner sep=2pt]
\begin{footnotesize}
\node at (lcp0) {$c'_{p}$};
\node at (lcp1) {$c'_{p+1}$};
\node at (lcp2) {$c'_{p+2}$};

\node at (lcp6) {$c'_{q}$};
\node at (lcp7) {$c'_{q+1}$};
\node at (lcp8) {$c'_{q+2}$};

\node at (lz1) {$z_1$};
\node at (lx1) {$x_1$};
\node at (lz2) {$z_2$};
\node at (lx2) {$x_2$};

\node at (lx) {$x$};
\node at (ly) {$y$};

\end{footnotesize}

\node at (lA0) {$B_{i-2}$};
\node at (lB0) {$A_{i-2}$};
\node at (lA1) {$A_{i-1}$};
\node at (lB1) {$B_{i-1}$};
\node at (lA2) {$B_{i}$};
\node at (lB2) {$A_{i}$};
\node at (lA3) {$A_{i+1}$};
\node at (lB3) {$B_{i+1}$};
\node at (lA4) {$B_{i+2}$};
\node at (lB4) {$A_{i+2}$};

\end{tikzpicture}

\caption{Illustration of the proof: the cycle $C'$ is marked with a dashed line. The set $X'$ is shaded.}
\label{fig:minimal_hole_cut}
\end{figure}

Suppose $Y'$ is the set consisting of all the neighbors of 
$z_1$ and $z_2$; that is, $Y' = N(z_1) \cap N(z_2)$.
Clearly, we have $Y' \subset A[i-2,i+2]$.
To complete the proof of the proposition we show that:
\begin{itemize}
 \item every element of $Y'$ is a member of $X$,
 \item $X' \cup Y'$ is a hole cut in $G$.
\end{itemize}
Then we have $X' \cup Y' = X$ by minimality of $X$ 
and consequently $X \subset V[i-2,i+2]$.
So, it remains to prove the claims about the set $Y'$. 

Suppose we have $y \in Y'$ such that $y \notin X$.
Since $X$ is a minimal hole cut, $X \setminus \{x\}$ is not a hole cut, where $x$ is some fixed element from $X'$.
That is, there is a hole $C'$ in $G - (X \setminus \{x\})$.
Note that $C'$ must contain~$x$.
Suppose $c'_0,\ldots,c'_{\ell-1}$ for some $\ell \geq 9$ are consecutive vertices in
$C'$ chosen such that $c'_j <_{cl} c_{j+2}$ for every $j \in [\ell-1]$ (indices are taken modulo~$\ell$).
Now we pick $p,q \in [\ell-1]$ such that 
$c'_p <_{cl} z_1 \leq _{cl} c'_{p+2}$ and $c'_q \leq_{cl} z_2  <_{cl} c'_{q+2}$.
Since $x \in C'$, we have $c'_{p+2} \leq_{cl} x$ and $c'_{q} \leq_{cl} x$.
Note that $c'_{p+1}$ is adjacent to $z_1$ and $c'_{q+1}$ is adjacent to $z_2$.
Next we replace in $C'$ all the vertices between $c'_{p+2}$ and $c'_q$ (this set includes $x$)  with the vertices $z_1,y,z_2$ and we obtain a cycle $C''$ 
containing no elements from $X$.
Clearly, we can easily find a hole among the elements from $C''$ that avoids all the elements from $X$.
This yields a contradiction as $X$ is a hole cut.

To prove the second claim, suppose there is a hole $C'$ in $G - (X' \cup Y')$.
By Proposition \ref{prop:holes} there are $c'_1,c'_2,c'_3 \in C'$ such that $c'_1 <_{cl} X' <_{cl} c'_3$ and $c'_1,c'_3 \in N(c_2)$.
However, this yields $c'_2 \in Y'$, which is a contradiction.
\end{proof}

\section{Proof of Theorem~\ref{thm_main_fpt}}\label{sec_fpt_proof}

The aim of this section is to provide a complete proof of Theorem~\ref{thm_main_fpt} using structural results from the previous section.
Let us start by showing that the \bpd problem can be decided in polynomial time on almost bipartite permutation graphs.

\begin{lemma}\label{lem_max_flow}
Let $(G,k)$ be an instance of \bpd where $G$ is an $n$-vertex almost bipartite permutation graph.
Then \bpd can be decided in time $\Oh{n^6}$. 
\end{lemma}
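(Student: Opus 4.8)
The plan is to reduce \bpd on an almost bipartite permutation graph to computing a \emph{minimum hole cut}, and then to compute such a cut through a bounded number of maximum-flow computations. First I would observe that, since the class is hereditary, for any $X\subseteq V(G)$ the graph $G-X$ is again almost bipartite permutation, so $G-X$ is a bipartite permutation graph if and only if it contains no hole; equivalently, $G-X$ is a bipartite permutation graph exactly when $X$ is a hole cut. Hence $(G,k)$ is a \yes-instance if and only if $G$ admits a hole cut of size at most $k$, and it suffices to compute the size of a minimum hole cut. Working per connected component, a component without a hole contributes $0$, while for a component containing a hole I fix a shortest hole $C$ of size $m\ge 10$ and build the decomposition into the sets $A_i,B_i$ together with the order $<_{cl}$ from Section~\ref{sec_structure}.

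Next I would localise the search using Proposition~\ref{prop_min_cut}: a minimum hole cut is in particular minimal, hence contained in $V[i-2,i+2]$ for some $i\in[m-1]$. It therefore suffices, for each of the $m=\Oh{n}$ indices $i$, to compute a minimum hole cut contained in the window $V[i-2,i+2]$ and then to take the minimum over all $i$. Since $m\ge 10$ we have $4<m-3$, so Lemma~\ref{lem:slices_are_bipartite_permutation} applies to the window and shows that $G[V[i-2,i+2]]$ is a bipartite permutation graph on which $<_{cl}$ restricts to a strong ordering; this local structure is what makes the single-window problem tractable.

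For a fixed $i$ I would reformulate ``minimum set $X\subseteq V[i-2,i+2]$ whose deletion destroys every hole of $G$'' as a minimum vertex-cut, equivalently maximum-flow, problem. The key point, supplied by Proposition~\ref{prop:holes}, is that every hole winds monotonically around the structure in the $<_{cl}$ order and must therefore traverse the window from its left columns to its right columns; hence a set $X$ contained in the window is a hole cut precisely when it separates the left side of the window from its right side. Concretely, as in the proof of Proposition~\ref{prop_min_cut}, I would anchor the separation at a pair of vertices $z_1\in B[i-2,i]$ and $z_2\in B[i,i+2]$ that are to be \emph{kept} (and symmetrically an $A$-version), let them play the role of source and sink, split every window vertex into an in- and an out-copy joined by a unit-capacity arc to encode vertex capacities, and compute a maximum flow; by the max-flow/min-cut theorem its value equals the size of a minimum vertex separator. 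Forward correctness (the removed separator leaves a graph with no hole, which is then bipartite permutation by Lemma~\ref{lem:slices_are_bipartite_permutation}) together with backward correctness (any window hole cut, in particular the cut $X'\cup Y'$ produced by Proposition~\ref{prop_min_cut}, yields a separator of the same size) shows that the minimum over all $i$ and all anchor choices equals the minimum hole cut of the component.

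Finally, for the running time there are $\Oh{n}$ windows, for each of them $\Oh{n^2}$ choices of the anchor pair $(z_1,z_2)$, and each maximum-flow computation runs in $\Oh{n^3}$ time, since the value of the flow is $\Oh{n}$ and each augmenting path costs $\Oh{n^2}$ on a network with $\Oh{n}$ vertices and $\Oh{n^2}$ arcs; this gives $\Oh{n}\cdot\Oh{n^2}\cdot\Oh{n^3}=\Oh{n^6}$ overall, after which summing the per-component minima and comparing the total with $k$ decides the instance. The step I expect to be the main obstacle is exactly the equivalence used in the third paragraph: proving rigorously that a vertex set contained in the window destroys \emph{all} holes of $G$ if and only if it separates the two sides of the window, i.e.\ that no hole can ``avoid'' the separation. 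This requires combining the winding description of holes (Proposition~\ref{prop:holes}) with the enclosure and adjacency structure of the strip (Lemma~\ref{lem:slices_are_bipartite_permutation}) to rule out holes entering and leaving the window without crossing the cut, and to match the anchored minimum cut exactly to the form $X'\cup Y'$ guaranteed by Proposition~\ref{prop_min_cut}.
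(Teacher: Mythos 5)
Your overall strategy coincides with the paper's: reduce to computing a minimum hole cut per component, localize it to a window $V[i-2,i+2]$ via Proposition~\ref{prop_min_cut}, and compute it by a vertex-capacitated max-flow (splitting each vertex into an in/out pair). However, the concrete flow network you propose is not the paper's, and it is exactly at the point you yourself flag as ``the main obstacle'' that your version breaks. You anchor the cut at a single pair $z_1\in B[i-2,i]$, $z_2\in B[i,i+2]$ and compute a minimum $z_1$--$z_2$ vertex separator inside the window, minimizing over all anchor pairs. But a set that meets every $z_1$--$z_2$ path need not meet every hole: a hole's intersection with the window is some left-to-right crossing path which in general contains neither $z_1$ nor $z_2$, and nothing forces it to pass through a $z_1$--$z_2$ separator. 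Consequently the minimum over anchor pairs can be strictly smaller than the minimum hole cut, and the algorithm could answer \yes{} on a \no-instance. The equivalence ``hole cut inside the window $\Leftrightarrow$ separates the left side of the window from the right side'' that you state in your third paragraph is the correct one, but your anchored network does not implement it, and you give no argument bridging the two.

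The paper's construction avoids this entirely: it builds a single network $H_i$ with a super-source $s$ joined to every window vertex having a neighbor in $V[i-4,i-3]$ and a super-sink $t$ joined to every window vertex having a neighbor in $V[i+3,i+4]$. By Proposition~\ref{prop:holes} every hole winds monotonically around the structure, so its trace in the window is an $s$--$t$ path in $H_i$; hence an $(s,t)$-vertex-cut is a hole cut and, conversely, a minimal hole cut in the window (of the form $X'\cup Y'$ from Proposition~\ref{prop_min_cut}) is an $(s,t)$-cut. One max-flow per window then suffices, with no enumeration of anchors. Two smaller omissions: you do not say how to find the shortest hole within the $\Oh{n^6}$ budget (the paper enumerates induced $P_4$'s $v_1v_2v_3v_4$, deletes $(N(v_2)\cup N(v_3))\setminus\{v_1,v_4\}$, and runs BFS, costing $\Oh{n^6}$, which in fact dominates the running time), and the case distinction into an ``$A$-version'' of the anchors is an artifact of the flawed formulation --- the super-source/super-sink network handles both sides at once.
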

\begin{proof}
If $G$ is a bipartite permutation graph, $(G,k)$ is a \yes-instance, thus, we are done in this case. If $G$ is not connected, we can consider each connected component independently and, at the end, we compare $k$ with  
 the total number of deleted vertices over all components.
Let $G'$ be a connected $r$-vertex component of $G$ such that $G'$ is not a bipartite permutation graph (otherwise, clearly, no vertex needs to be deleted).
Let $C=\{c_0,\dots,c_{m-1}\}$ be a shortest hole in $G'$ (it exists as $G'$ is not a bipartite permutation graph). 
It can be found in time $\Oh{r^6}$ as follows.
We iterate over all possible four-element subsets $S=\{v_1,v_2,v_3,v_4\}$ of $V(G')$. 
For these $S$ for which $G'[S]$ is an induced $P_4$, with consecutive vertices $v_1,v_2,v_3,v_4$, we construct a graph $\widetilde{G'}$ by removing the vertices from $(N(v_2) \cup N(v_3)) \setminus \{v_1,v_4\}$ (note that $v_2$ and $v_3$ also get removed).
Then we find a shortest $v_1$-$v_4$-path in $\widetilde{G'}$ in time $\Oh{r^2}$.

By Proposition~\ref{prop_min_cut}, every minimal hole cut $X$ in $G'$ is contained in the set $V'=V_{G'}[i-2,i+2]$ for some $i \in [m-1]$.
Therefore, we may check all the possibilities where a minimal cut is contained.
For every $i$, we run an algorithm for finding a maximum flow in the following digraph $H_i$.

\def\iin{\mathsf{in}}\def\oout{\mathsf{out}}
Digraph $H_{i}$ has the vertex set $V'\times\{\iin,\oout\}\cup\{s,t\}$ and arc set consisting~of:
\begin{itemize}
\item all arcs of the form $(u,\oout)(v,\iin)$, where $uv$ is an edge of $G'[V']$,
\item $s(v,\iin)$ if there exists $u\in V_{G'}[i-4,i-3]$ such that $uv$ is an edge of $G'$,
\item $(u,\oout)t$ if there exists $v\in V_{G'}[i+3,i+4]$ such that $uv$ is an edge of $G'$,
\item $(u,\iin)(u,\oout)$ for all $u\in V'$.
\end{itemize}
Set capacities of arcs of the form $(u,\iin)(u,\oout)$ to $1$ and capacities of all the remaining arcs to $\infty$ (practically $|V_{G'}|$). It is readily seen that minimum $(s,t)$-cut in the defined network $H_i$ corresponds to minimum hole cut in $G'[V']$ (arc of unit capacity $(u,\iin)(u,\oout)$ naturally corresponds to the vertex $u$ of $G'$).

Therefore it remains to apply classical max-flow algorithm to each $H_i$ for $i\in[m-1]$ and 
remember the smallest size $k_{G'}$ of minimal $(s,t)$-cuts.
This can be performed in time 
$\Oh{m\cdot(|V'|+2)\cdot(|E_{G'[V']}|+2|V'|)^2}=\Oh{r^6}$
 \cite{edmonds1972theoretical}.
Finally, $(G,k)$ is a \yes-instance if and only if the sum of remembered sizes $k_{G'}$ over the all considered connected components $G'$ is at most $k$. Clearly, the total running time is $\Oh{n^6}$.
\end{proof}

We now propose the algorithm.
Given an $n$-vertex graph $G=(V,E)$ and number $k$, we want to answer 
the \bpd problem.
 We say that $(G,k)$ is the \emph{initial} instance.
We split our algorithm into two parts.
The first part consists of a branching algorithm for deletion to almost bipartite permutation graphs. The output of the first part is a set of instances $(G',k')$ where $G'$ is an almost bipartite permutation graph and $0\le k'\le k$ (or \no-answer is no such instance exists) such that the initial instance $(G,k)$ is a \yes-instance if and only if at least one of these instances is a \yes-instance.
We show that the overall time of the first phase is $\Oh{n^9\cdot 9^k}$. 
In the second part, the algorithm runs an $\Oh{n^6}$-time algorithm for \bpd for each instance $(G',k')$ output by the first phase.
In the second part, the algorithm runs an $\Oh{n^6}$ algorithm for \bpd for each instance $(G',k')$ output by the first phase.

Let us start with the first part. 
We say that $X \subseteq V$ is a \emph{forbidden set} if $G[X]$ is isomorphic to one of the graphs: $K_3, T_2, X_2, X_3, C_5, C_6, C_7, C_8, C_9$. 
We define the following rule.
\begin{enumerate}[Rule]
\item: Given an instance $(G, k)$, $k\ge 1$, and a minimal forbidden set $X$, branch into $|X|$ instances, $(G - v, k - 1)$ for each $v \in X$. \label{rule_small_forbidden_sets}
\end{enumerate}
Starting with the initial instance, the algorithm applies the rule exhaustively.
In other words, the algorithm is a branching tree with leaves corresponding to instances $(G',k')$ where  $k'=0$ or $G'$ is an almost bipartite permutation graph. 
Clearly, as at least one vertex from each forbidden set must be removed from $G$, the initial instance is a \yes-instance if and only if at least one of the leaves is a \yes-instance.

The algorithm continues to the second part only with such leaves $(G',k')$ that $G'$ is an almost bipartite permutation graph (as otherwise, the leaf is \no-instance). 
It runs the algorithm described in Lemma~\ref{lem_max_flow} to find if $G'$ can be transformed into a bipartite permutation graph by using at most $k'$ vertex deletions.
It either finds a \yes-instance or concludes after checking all the instances that there is no solution; that is, the initial instance is a \no-instance.

We note that such a branching into a bounded number of smaller instances is a standard technique, see e.g.,~\cite{HV13} for more details.

We now analyze the running time of the whole algorithm.
In the first part, observe that the branching tree has depth at most $k$ and has at most ${9^k}$ leaves, as $k$ decreases by one whenever the algorithm branches and each of the listed forbidden subgraphs has at most nine vertices. 
Therefore the total number of nodes in the branching tree is $\Oh{9^k}$. 
Moreover, in each node $(G'',k'')$, the algorithm works in time $\Oh{n^9}$ as it checks if $G''$ contains a forbidden set. 
In the second part, the algorithm does a work $\Oh{n^6}$ in each leaf, by Lemma~\ref{lem_max_flow}.
We conclude that the total running time of our algorithm for \bpd is $\Oh{9^k\cdot n^9}$.

\section{Proof of Theorem~\ref{thm_main_apx}}\label{sec_apx}

In this section, we provide a proof of Theorem \ref{thm_main_apx}. The idea of the algorithm is very similar to the \FPT algorithm described in Section \ref{sec_fpt_proof}.

Let $G=(V,E)$ be a graph and let $Y\subseteq V$ be a subset of vertices of $G$ such that $G-Y$ is a bipartite permutation graph. We want to construct a set $Z\subseteq V$ in polynomial time such that $G-Z$ is a bipartite permutation graph and $|Z|\leq 9|Y|$. We construct $Z$ as follows. We start with $Z=\emptyset$. Then, as long as $G-Z$ contains a set $X$ isomorphic to one of $K_3, T_2, X_2, X_3, C_5, C_6, C_7, C_8, C_9$ we add all vertices of $X$ to $Z$. Observe that $Y\cap X \neq \emptyset$ and $|X|\leq 9$.

After this step $G-Z$ is an almost bipartite permutation graph. Note that $|Z|\leq 9|Z\cap Y|$. We find a shortest hole $C=\{c_0, \dots, c_{m-1}\}$ in $G-Z$ and find a minimum hole cut $X$ as described in Section \ref{sec_fpt_proof}. Since $(Y-Z)$ is a hole cut in $G-Z$ we have $|X|\leq |Y-Z|$. We add $X$ to $Z$. Observe that $G-Z$ is a bipartite permutation graph.

Since $K_3, T_2, X_2, X_3, C_5, C_6, C_7, C_8, C_9$ have at most 9 vertices, we have that $|Z|\leq 9|Y|$. This implies that the above algorithm is a $9$-approximation algorithm. It runs in polynomial time because finding small forbidden subgraphs can be done in polynomial time and finding minimum hole cut in an almost bipartite permutation graph can be done in polynomial time.

\section{Conclusion}\label{sec_conclusion}

In this paper we investigate for the first time the modification problems in graph classes related to partial orders. 
Our main result says that the bipartite permutation vertex deletion problem is fixed parameter tractable.
We leave open the following two questions that inspired our research.
\begin{problem}
\label{problem:two_main_problems}
What is the parameterized status of the vertex deletion problems to the class of permutation graphs and to the class of co-comparability graphs?
\end{problem}
We recall that, due to the result of Lewis and Yannakakis \cite{LewYan78}, both of these problems are \NP-complete.
One of the most important result of our work is the description of the structure 
of almost bipartite permutation graphs, which are defined as graphs which do not induce small graphs from the list of forbidden structures for bipartite permutation graphs.
In a similar fashion we can define the class of \emph{almost permutation} and \emph{almost co-comparability graphs}.
The next two questions seem very natural in order to solve Problem~\ref{problem:two_main_problems}.
\begin{problem}
What is the structure of almost permutation and almost co-compara\-bility graphs?
\end{problem}
We are aware that the two problems mentioned above can be quite difficult.
Therefore, it is worth considering intermediate problems that may be easier to attack.
One of the proposed simplifications relies on the transition from the world of graphs to the world of posets.
The following \emph{vertex deletion into two-dimensional posets} problem seems very natural in the context of our research: we are given in the input a poset $P$ and a number $k$ and we ask whether we can delete at most $k$ points from $P$ so that the remaining points induce a two-dimensional poset in $P$.
\begin{problem}
\label{prob:vertex_deletion_into_two_dimensional_posets}
What is the parameterized status of the vertex deletion into two-dimensional poset problem?
\end{problem}
Since permutation graphs are co-comparability graphs of two-dimensional posets and since permutation graphs are both comparability and co-comparability graphs, 
the vertex deletion into two-dimensional poset problem is equivalent to the vertex deletion into co-comparability graph (or into permutation graph) problem if we assume that only comparability graphs can be given in the input.
The class of two-dimensional posets is very well understood; in particular, the list of minimal forbidden structures for this class of posets, which is still infinite, is known (obtained independently by Trotter and Moore \cite{TrotterMoore76} and by Kelly \cite{Kelly77}).
Of course, it is natural to ask the following question:
\begin{problem}
What is the structure of almost two-dimensional posets?
\end{problem}
Since the comparability graphs of posets do not contain odd holes of size $\geq 5$,
we know the structure of almost two-dimensional posets that are bipartite.
Indeed, these are the posets whose comparability graphs are almost bipartite permutation graphs embeddable into cylinder stripes.
The last problem we want to ask is as follows:
\begin{problem}
 Is there a polynomial kernel for the bipartite permutation vertex deletion problem?
\end{problem}
A positive answer to this question obtained by indicating so-called \emph{irrelevant vertices} may give some hope to solve Problem~\ref{problem:two_main_problems} with the use of irrelevant vertex technique.

\bigskip

\section*{Acknowledgment} The authors are grateful to Bartosz Walczak for valuable comments and help with merging two groups of researchers working on similar projects into one.
They also would like to thank the anonymous IPEC reviewers for their helpful comments.

\newpage
\bibliographystyle{plain}
\bibliography{references}

\end{document}